\newcommand{\inV}{^-}
\newcommand{\gen}{{\textsf{A}}}
\newcommand{\ew}{\varepsilon}
\newcommand{\sA}{\textsf{A}}
\newcommand{\al}{\sA}
\newcommand{\sB}{\textsf{B}}
\newcommand{\Br}{\sB}
\newcommand{\Z}{\mathbb{Z}}
\renewcommand{\L}{\mathcal{L}}
\newcommand{\T}{\mathcal{T}}
\newcommand{\F}{\mathcal{F}}
\newcommand{\D}{\mathcal{D}}
\newcommand{\sem}[1]{\llbracket#1\rrbracket}
\newcommand{\ignore}[1]{}
\newcommand{\angl}[1]{\langle#1\rangle}
\newcommand{\fgeq}{=_{\F_\gen}}
\newcommand{\dunion}{\sqcup}
\newcommand{\dconc}{\star}
\newcommand{\vars}{\mathfrak{X}}
\newcommand{\abs}[1]{\lvert#1\rvert}
\newcommand{\LTW}{${\sf LT}_\gen$}
\newcommand{\SLP}{{\sf SLP}}
\newcommand{\DTA}{{\sf DTA}}
\newcommand{\dom}{\textsf{dom}}
\newcommand{\depth}{\textsf{depth}}
\spnewtheorem*{lem}{Lemma}{\bfseries}{\itshape}
\spnewtheorem*{cor}{Corollary}{\bfseries}{\itshape}
\begin{document}

\title{Equivalence of Linear Tree Transducers with Output in the Free Group}
\author{Raphaela L\"obel\and
Michael Luttenberger
\and
Helmut Seidl}
\authorrunning{L\"obel et al.}
\institute{TU M\"unchen, Germany 
\email{\{loebel, luttenbe, seidl\}@in.tum.de}}
\maketitle              
\begin{abstract}
We show that equivalence of deterministic linear tree transducers 
can be decided in polynomial time when their outputs are interpreted over the free group.
Due to the cancellation properties offered by the free group, the required constructions
are not only more general, but also simpler than the corresponding constructions for proving
equivalence of deterministic linear tree-to-word transducers.

\keywords{Linear tree transducer \and free group \and equivalence problem \and polynomial time}
\end{abstract}

\section{Introduction}\label{sec:intro} 

In 2009, Staworko and Niehren observed that
equivalence for \emph{sequential} tree-to-word transducers 
\cite{Staworko2009} 
can be reduced to the morphism equivalence problem for context-free languages.
Since the latter problem is decidable in polynomial time \cite{Plandowski}, 
they thus proved that equivalence of sequential tree-to-word transducers is decidable in
polynomial time.
This decision procedure was later accompanied by a canonical normal form
which
can be applied to learning \cite{Laurence2011,Laurence2014}. 
Sequentiality of transducers means that subtrees must always be processed from left to right.
This restriction was lifted by Boiret who provided a canonical normal form for
unrestricted linear tree-to-word transducers \cite{Boiret2016}. 
Construction of that normal form, however, may require exponential time implying that
the corresponding decision procedure requires exponential time as well.
In order to improve on that, Palenta and Boiret provided a polynomial time procedure which
just normalizes the \emph{order} in which an unrestricted linear tree-to-word transducer 
processes the subtrees of its input \cite{Palenta2016}.  
They proved that after that normalization, equivalent transducers are necessarily \emph{same-ordered}.
As a consequence, equivalence of linear tree-to-word transducers can thus also be reduced 
to the morphism equivalence problem for context-free languages and thus can be decided in
polynomial time. 
Independently of that, Seidl, Maneth and Kemper showed by algebraic means, that 
equivalence of general (possibly non-linear) tree-to-word transducers is decidable \cite{SeidlMK18}.
Their techniques are also applicable if the outputs of transducers are not just in a
free monoid of words, but also if outputs are in a free group.
The latter means that output words are considered as equivalent not just when they are literally 
equal, but also when they become equal after cancellation of matching positive and 
negative occurrences of letters.
For the special case of \emph{linear} tree transducers with outputs either in a
free monoid or a free group, Seidl et al.\ provided a \emph{randomized} polynomial time
procedure for in-equivalence.
The question remained open whether for outputs in a free group, 
randomization can be omitted.
Here, we answer this question to the affirmative.
In fact, we follow the approach of \cite{Palenta2016} to normalize the order in which
tree transducers produce their outputs. For that normalization, we heavily rely on
commutation laws as provided for the free group. 
Due to these laws, the construction as well as the arguments for its correctness,
are not only more general but also much cleaner than in the case of outputs in a free monoid only.
The observation that reasoning over the free group may simplify arguments has
also been made, e.g., by Tomita and Seino and later by Senizergues 
when dealing with the equivalence problem for deterministic pushdown transducers
\cite{TomitaS1989,Senizergues1999}.
As morphism equivalence on context-free 
languages is decidable in polynomial time --- even if the morphism outputs are in a free group
\cite{Plandowski}, we obtain a polynomial time algorithm for equivalence of
tree transducers with output in the free group.

 \section{Preliminaries}\label{sec:prelims}\label{s:prelims}
 
 We use $\Sigma$ to denote a finite \emph{ranked} alphabet, while $\gen$ is used for an \emph{unranked} alphabet.
 $\T_\Sigma$ denotes the set of all trees (or terms) over $\Sigma$.
 The \emph{depth} $\depth(t)$ of a tree $t\in\T_\Sigma$ equals 0, if $t= f()$ for some $f\in\Sigma$ of rank 0,
 and otherwise, $\depth(t) = 1+\max\{\depth(t_i)\mid i=1,\ldots,m\}$ for $t=f(t_1,\ldots,t_m)$.
 We denote by $\F_{\gen}$ the representation of the free group generated by $\gen$ where the carrier is the set of \emph{reduced} words instead of the usual quotient construction:
 For each $a\in\gen$, we introduce its \emph{inverse}
 $a\inV$.  The set of elements of $\F_\gen$ then consists of all words over 
 the alphabet $\{a,a\inV\mid a\in\gen\}$ which do not contain $a\,a\inV$ or $a\inV a$ as factors. 
 These words are also called \emph{reduced}.
 In particular, $\gen^*\subseteq\F_\gen$.
 The group operation ``$\cdot$'' of $\F_\gen$ is concatenation, followed by \emph{reduction}, i.e.,
 repeated cancellation of subwords $a\, a\inV$ or $a\inV a$.
 Thus, $a\,b\,c\inV\cdot c\,b\inV\,a \fgeq a\,a$.
 The \emph{neutral element} w.r.t.\ this operation is the empty word $\ew$, while the inverse $w\inV$ of
 some element $w\in\F_\gen$ is obtained by reverting the order of the letters in $w$ while 
 replacing each letter $a$ with $a\inV$ and each $a\inV$ with $a$.
 Thus, e.g., $(a\,b\,c\inV)\inV = c\,b\inV a\inV$.
 
 In light of the inverse operation $(\,.\,)\inV$, 
 we have that $v\cdot w \fgeq v'w'$ where
 $v = v'u$ (as words) for a maximal suffix $u$ so that $u\inV$ is a prefix of $w$ with $w= u\inV w'$.
 For an element $w\in\F_\gen$, $\angl{w} = \{w^l\mid l\in\Z\}$ denotes the cyclic subgroup
 of $\F_\gen$ generated from $w$. As usual, we use the convention that $w^0 = \ew$,
 and $w^{-l} = (w\inV)^l$ for $l>0$.
 An element $p\in\F_\gen$ different from $\ew$, is called \emph{primitive} if $w^l\fgeq p$ for some
 $w\in\F_\gen$ and $l\in\Z$ implies that $w\fgeq p$ or $w\fgeq p\inV$, i.e., $p$ and $p\inV$ 
 are the only (trivial) roots of $p$. 
 Thus, primitive elements generate \emph{maximal} cyclic subgroups of $\F_\gen$.
 We state two crucial technical lemmas. 
 
\begin{lemma}\label{l:todo}
Assume that $y^m \fgeq \beta\cdot y^n\cdot\beta\inV$ with $y\in\F_\gen$ primitive.
Then $m=n$, and $\beta \fgeq y^k$ for some $k\in\Z$.
\end{lemma}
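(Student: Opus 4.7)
The plan is to first reduce to the case where $y$ is \emph{cyclically reduced}, then to analyse the structure of $\beta$ imposed by the equation, and finally to conclude via the \emph{unique-roots property} of $\F_\gen$ (that $a^r\fgeq b^r$ for $r\neq 0$ implies $a\fgeq b$). Throughout I treat the non-degenerate case $m\neq 0$. For the reduction, write $y\fgeq\alpha\cdot y_0\cdot\alpha\inV$ with $y_0$ cyclically reduced; primitivity of $y$ transfers to $y_0$. Setting $\gamma = \alpha\inV\beta\alpha$ turns the hypothesis into $y_0^m\fgeq\gamma\cdot y_0^n\cdot\gamma\inV$, and establishing $m=n$ and $\gamma\fgeq y_0^k$ gives $\beta\fgeq\alpha y_0^k\alpha\inV\fgeq y^k$. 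I may thus assume $y$ itself is cyclically reduced.

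With $y$ cyclically reduced, $y^m$ and $y^n$ are reduced words of lengths $\abs{m}\cdot\abs{y}$ and $\abs{n}\cdot\abs{y}$, respectively. Applying the cancellation rule from the preliminaries at both boundaries of $\beta\cdot y^n\cdot\beta\inV$ and comparing reduced lengths on either side of the hypothesis (using also the symmetric equation $y^n\fgeq\beta\inV\cdot y^m\cdot\beta$) yields $\abs{m}=\abs{n}$ together with the stronger fact that $\beta\inV$ is a prefix of $y^n$ (and $\beta$ a suffix). Writing $\beta\inV\fgeq y^q\cdot y_1$ with $y\fgeq y_1\cdot y_2$ in reduced form and $0\leq\abs{y_1}<\abs{y}$, substitution gives $y^m\fgeq y_1\inV\cdot y^n\cdot y_1\fgeq (y_2 y_1)^n$.

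Both $y$ and its cyclic rotation $y_2 y_1$ are cyclically reduced and primitive (conjugation by $y_1$ preserves primitivity), so the equation $y^m\fgeq(y_2 y_1)^n$ with $\abs{m}=\abs{n}$ splits into two sub-cases by unique roots: either $m = n$ with $y\fgeq y_2 y_1$, or $m = -n$ with $y\fgeq(y_2 y_1)\inV$. In the first sub-case, $y_1 y_2\fgeq y_2 y_1$ means $y_1$ and $y_2$ commute; a further use of unique roots shows they are powers of a common element, and primitivity of $y$ combined with $\abs{y_1}<\abs{y}$ forces $y_1\fgeq\ew$. Hence $\beta\fgeq y^{-q}$ is a power of $y$, as desired.

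The main technical obstacle is excluding the sub-case $m = -n$. Here combining $y\fgeq y_1 y_2$ with $y\fgeq(y_2 y_1)\inV\fgeq y_1\inV\cdot y_2\inV$ yields $y_1 y_2\fgeq y_1\inV y_2\inV$, hence $y_1^2\fgeq y_2^{-2}$, and unique roots once more give $y_1\fgeq y_2\inV$, so $y\fgeq\ew$ --- contradicting primitivity of $y$. Together with the preceding sub-case this establishes the lemma.
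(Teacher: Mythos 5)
There is a genuine gap at the decisive step. You claim that cancellation at the two junctions of $\beta\cdot y^n\cdot\beta\inV$, together with a comparison of reduced lengths (also for the symmetric equation), yields that $\beta\inV$ is a prefix of $y^n$ and $\beta$ a suffix of it. As stated this is false: take $y\fgeq ab$, $m=n=1$, $\beta\fgeq y^2$; the hypothesis holds, yet $\beta\inV\fgeq b\inV a\inV b\inV a\inV$ is not a prefix of $y^n\fgeq ab$, nor is $\beta$ a suffix of it. What your subsequent computation actually uses is the corrected statement $\beta\inV\fgeq y^q\cdot y_1$ with $q\in\Z$ and $y_1$ a proper prefix of $y$ --- but this is precisely the combinatorial core of the lemma (it is the conjugacy theorem for free groups: a conjugator between cyclically reduced words lies, modulo the centralizer $\angl{y}$, among the rotation pieces of the word), and it does not follow from length bookkeeping. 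If the cancellations at the two junctions have lengths $c$ and $d$ and do not meet, comparing lengths only gives $c+d=\abs{\beta}$, which says nothing about $\beta$ being aligned with the period $y$; to rule out misalignment you need a periodicity/overlap argument of the kind the paper runs (or an induction on $\abs{\beta}$ peeling off cancelled letters). Moreover, when $\abs{\beta}$ is large the two cancellations can consume $y^n$ entirely and continue into $\beta$ and $\beta\inV$ (exactly what happens in the example above), a case your sketch never addresses. Even the preliminary claim $\abs{m}=\abs{n}$ needs more than comparing the two reduced lengths and the symmetric equation: these only give $\abs{\,\abs{m}-\abs{n}\,}\cdot\abs{y}\le 2\abs{\beta}$; a quick repair is to apply the same bound to $y^{km}\fgeq\beta\cdot y^{kn}\cdot\beta\inV$ for all $k$, or to invoke invariance of cyclically reduced length under conjugacy.

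Once that structural fact about $\beta$ is actually supplied (by citing the conjugacy theorem in Lyndon--Schupp, or by a genuine cancellation induction), the remainder of your argument --- the rotation identity $y^m\fgeq(y_2y_1)^n$, the unique-roots dichotomy, the commutation argument forcing $y_1\fgeq\ew$, and the computation $y_1^2\fgeq y_2^{-2}$ excluding $m=-n$ --- is correct and constitutes a genuinely different route from the paper's, which instead derives $y^j\fgeq\beta\cdot y\cdot\beta\inV$ from Proposition 2.17 of Lyndon--Schupp and then argues via prefixes of $y^{m+1}$. As written, however, your proposal assumes at its key step essentially what is to be proved.
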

\begin{proof}
Since $\beta\cdot y^n\cdot\beta\inV \fgeq (\beta\cdot y\cdot\beta\inV)^n$,
we find by \cite[Proposition 2.17]{Lyndon} a primitive element $c$ such that $y$ and 
$\beta\cdot y\cdot\beta^-$
are powers of $c$. As $y$ is primitive, $c$ can be chosen as $y$. Accordingly,
\begin{equation}
{\small y^j\fgeq\beta\cdot y\cdot \beta^-}\label{eq:conjugate}
\end{equation}
holds for some $j$.
If $\beta$ is a power of $y$, then $\beta\cdot y\cdot\beta\inV\fgeq y$, and the assertion
of the lemma holds. 
Likewise if $j=1$, then $\beta$ and $y$ commute. Since $y$ is primitive, then $\beta$ necessarily 
must be a power of $y$.

For a contradiction, therefore now assume that $\beta$ is not a power of $y$ and $j\neq 1$.
W.l.o.g., we can assume that $j>1$.
First, assume now that $y$ is \emph{cyclically reduced}, i.e., 
the first and last letters, $a$ and $b$, respectively, of $y$
are not mutually inverse. 
Then for each $n>0$, $y^n$ is obtained from $y$ by $n$-concatenation of $y$ as a word 
(no reduction taking place).
Likewise, either the last letter of $\beta$ is different $a\inV$ or the first letter of $\beta\inV$ 
is different from $b\inV$ because these two letters are mutually inverse.
Assume that the former is the case.
Then $\beta\cdot y$ is obtained by concatenation of $\beta$ and $y$ as words (no reduction taking
place).
By \eqref{eq:conjugate}, $\beta\cdot y^n\fgeq y^{j\cdot n}\cdot\beta$.
for every $n\geq 1$.
Let $m>0$ denote the length of $\beta$ as a word.
Since $\beta$ can cancel only a suffix of $y^{j\cdot n}$ of length at most $m$,
it follows, that the word $\beta\,y$ must a prefix of the word $y^{m+1}$.
Since $\beta$ is not a power of $y$, the word $y$ can be factored into $y=y'c$ for
some non-empty suffix $c$ such that $\beta=y^{j'}y'$, implying that $yc = cy$ holds. 
As a consequence, $y=c^{l}$ for some $l>1$ --- in contradiction to the irreducibility of $y$.

If on the other hand, the first letter of $\beta\inV$ is not the inverse of the last letter of $y$,
then $y\cdot\beta\inV$ is obtained as the concatenation of $y$ and $\beta\inV$ as words.
As a consequence, $y\beta\inV$ is a suffix of $y^{m+1}$, and we arrive at a contradiction.

We conclude that the statement of the lemma holds whenever $y$ is cyclically reduced.
Now assume that $y$ is not yet cyclically reduced. Then we can find a maximal suffix $r$ of $y$
(considered as a word) such that $y = r\inV sr$ holds and $s$ is cyclically reduced.
Then $s$ is also necessarily primitive. (If $s \fgeq c^n$, then $y\fgeq (r\inV cr)^n$).
Then assertion \eqref{eq:conjugate} can be equivalently formulated as
$$
{\small s^j\fgeq(r\cdot\beta\cdot r\inV)\cdot y\cdot (r\cdot\beta\cdot r\inV)^-}
$$
We conclude that $r\cdot\beta\cdot r\inV\fgeq s^l$ for some $l\in\Z$. But then
$\beta \fgeq (r\inV\cdot s\cdot r)^l \fgeq y^l$, and the claim of the lemma follows.
\end{proof}

\begin{lemma}\label{l:periodicGroup}
Assume that $x_1,x_2$ and $y_1,y_2$ are distinct elements in $\F_\gen$ and that
\begin{equation}
{\small x_i\cdot\alpha\cdot y_j\cdot\beta \fgeq \gamma\cdot y'_j\cdot\alpha'\cdot x'_i\cdot\beta'}
\label{eq:commute}
\end{equation}
holds for $i=1,2$ and $j=1,2$.
Then there is some primitive element $p$ and exponents $r,s \in \Z$ such that
$x_1\cdot\alpha \fgeq x_2\cdot\alpha\cdot p^r$ and $y_1 \fgeq p^s\cdot y_2$.
\end{lemma}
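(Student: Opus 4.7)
My plan is to combine the four hypothesis equations into a single commutation identity in $\F_\gen$ and then invoke the classical fact (also used in Lemma \ref{l:todo}) that two non-identity commuting elements of a free group lie in a common maximal cyclic subgroup, i.e.\ are both powers of a single primitive element.

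Write $E(i,j)$ for the hypothesis equation indexed by $i,j$. For each fixed $j\in\{1,2\}$ I would form $E(1,j)^{-1}\cdot E(2,j)$. On the right-hand side the factors $\gamma$, $y'_j$, $\alpha'$ and $\beta'$ cancel pairwise against their inverses, leaving
$$
\beta^{-1}\cdot y_j^{-1}\cdot\bigl(\alpha^{-1}\cdot x_1^{-1}\cdot x_2\cdot\alpha\bigr)\cdot y_j\cdot\beta \fgeq \beta'^{-1}\cdot\bigl(x_1'^{-1}\cdot x_2'\bigr)\cdot\beta'.
$$
Set $C := \alpha^{-1}\cdot x_1^{-1}\cdot x_2\cdot\alpha$; since $x_1\neq x_2$, $C\neq\ew$. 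The decisive observation is that the right-hand side does not depend on $j$. Equating the two instances ($j=1$ and $j=2$) and cancelling $\beta$ and $\beta^{-1}$ at the ends yields $y_1^{-1}\cdot C\cdot y_1 \fgeq y_2^{-1}\cdot C\cdot y_2$, which rearranges to
$$
(y_2\cdot y_1^{-1})\cdot C \fgeq C\cdot(y_2\cdot y_1^{-1}).
$$
Since $y_1\neq y_2$, the element $q := y_2\cdot y_1^{-1}$ is also non-identity.

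Thus $C$ and $q$ are non-identity commuting elements of $\F_\gen$, and so by the standard free-group fact both lie in a common maximal cyclic subgroup, generated by some primitive element $p$. Hence there exist $r,s\in\Z$ with $C\fgeq p^{-r}$ and $q\fgeq p^{-s}$ (the signs chosen to match the conclusion). Unravelling, $C\fgeq p^{-r}$ is equivalent to $x_1\cdot\alpha\fgeq x_2\cdot\alpha\cdot p^r$, and $q\fgeq p^{-s}$ is equivalent to $y_1\fgeq p^s\cdot y_2$, which is exactly what is required. The only delicate step is spotting the right combination of the four equations; any other pairing still leaves the awkward outer factors $\gamma,\alpha',\beta'$ on the right, while the symmetric choice $E(1,j)^{-1}\cdot E(2,j)$ cleanly exposes the desired commutation, so no cyclic-reduction analysis of the kind used in Lemma \ref{l:todo} is needed here.
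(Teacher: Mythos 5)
Your proof is correct and follows essentially the same route as the paper: you exploit that combining the $i=1$ and $i=2$ equations (for fixed $j$) yields a $j$-independent right-hand side, deduce that $y_2\cdot y_1\inV$ commutes with the nontrivial element $\alpha\inV\cdot x_1\inV\cdot x_2\cdot\alpha$, and conclude via the primitive root. The only cosmetic difference is that the paper phrases the final step as a conjugation identity and invokes Lemma~\ref{l:todo}, while you invoke the equivalent classical fact that nontrivial commuting elements of a free group are powers of a common primitive element.
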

\begin{proof}
By the assumption \eqref{eq:commute},
\[
{\small\begin{array}{lll}
\gamma	&\fgeq&(x_1\cdot\alpha\cdot y_j\cdot\beta)\cdot
               (y'_j\cdot\alpha'\cdot x'_1\cdot\beta')\inV 	\\
	&\fgeq&(x_2\cdot\alpha\cdot y_j\cdot\beta)\cdot
               (y'_j\cdot\alpha'\cdot x'_2\cdot\beta')\inV 	\\
\end{array}}
\]
for all $j=1,2$. Thus,
\[
{\small\begin{array}{lll@{\quad}l}
x_1\cdot\alpha\cdot y_j\cdot\beta{\beta'}^- {x'}^-_1{\alpha'}^-{y'}^-_j	&\fgeq&
x_2\cdot\alpha\cdot y_j\cdot\beta\cdot{\beta'}^-\cdot {x'}^-_2\cdot{\alpha'}^-\cdot{y'}^-_j
&\text{\normalsize implying}	\\
y_j^-\cdot\alpha^-\cdot x_2^-\cdot x_1\cdot\alpha\cdot y_j	&\fgeq& 
\beta\cdot{\beta'}^-\cdot {x'}^-_2\cdot x'_1\cdot\beta'\cdot\beta^-
\end{array}}
\]
for $j=1,2$. Hence,
\[
{\small\begin{array}{lll@{\quad}l}
y_1\inV\cdot\alpha\inV\cdot x_2\inV\cdot x_1\cdot \alpha\cdot y_1	&\fgeq& 
y_2\inV\cdot\alpha\inV\cdot x_2\inV\cdot x_1\cdot \alpha\cdot y_2
&\text{\normalsize implying}\\
(x_2\cdot\alpha)\inV x_1\cdot\alpha &\fgeq& 	
(y_1\cdot y_2\inV)\cdot((x_2\cdot\alpha)\inV\cdot x_1\cdot\alpha)\cdot(y_1\cdot y_2\inV)\inV 
\end{array}}
\]
Since $x_1$ is different from $x_2$, also $x_1\cdot\alpha$ is different from $x_2\cdot\alpha$.
Let $p$ denote a primitive root of $(x_2\cdot\alpha)\inV\cdot x_1\cdot\alpha$.
Then by Lemma \ref{l:todo},
\[
{\small\begin{array}{lll}
x_1\cdot\alpha &\fgeq& x_2\cdot\alpha\cdot p^r	\\
y_1 &\fgeq& p^s\cdot y_2
\end{array}}
\]
for suitable exponents $r,s\in\Z$.
\end{proof}
As the elements of $\F_\gen$ are words, they can be represented by 
\emph{straight-line} programs (\SLP s). An \SLP\ is a context-free grammar
where each non-terminal occurs as the left-hand side of exactly one rule.
We briefly recall basic complexity results for operations on elements of $\F_\gen$ 
when represented as \SLP s \cite{DBLP:books/daglib/0038981}.
\begin{lemma}\label{l:SLPops}
Let $U, V$ be \SLP s representing words $w_1, w_2 \in \{a, a^- \mid a \in \gen\}$, respectively.
Then the following computations/decision problems can be realized in polynomial time
\begin{compactitem}
\item compute an \SLP\ for $w_1^{-}$;
\item compute the primitive root of $w_1$ if $w_1\neq\ew$;
\item compute an \SLP\ for $w \fgeq w_1$ with $w$ reduced;
\item decide whether $w_1 \fgeq w_2$;
\item decide whether it exists $g \in \F_\gen$, such that $w_1 \in g\cdot\angl{w_2}$ and
compute an SLP for such $g$.
\end{compactitem}
\end{lemma}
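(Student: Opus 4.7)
The plan is to lean on the well-developed polynomial-time toolkit for SLP-compressed words, in particular Plandowski's equality test, Lifshits's longest-common-extension (LCE) queries, and SLP period-finding (all surveyed in \cite{DBLP:books/daglib/0038981}), and to combine these primitives with the algebraic structure of $\F_\gen$ already used in Lemma \ref{l:todo}.

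For the first bullet, I would build an SLP for $w_1\inV$ of linear size by a purely syntactic transformation: every production $X\to YZ$ is mirrored as $\bar X\to\bar Z\bar Y$, every terminal production $X\to a$ becomes $\bar X\to a\inV$, and the start symbol becomes the barred version of the original one. For the third bullet, reducing a word given by an SLP to its freely reduced form, the plan is the standard iterative cancellation procedure on SLPs: using LCE between an SLP and its inverse SLP (from the first bullet), locate, for each concatenation $YZ$ appearing in the grammar, the longest cancelable boundary between $Y$ and $Z$, splice in shortened variants of the factors, and repeat; each pass removes at least one symbol from the eventual normal form, and the grammar can be kept polynomial-sized throughout. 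The fourth bullet is then immediate: reduce both $w_1$ and $w_2$ and invoke Plandowski's polynomial-time equality test on the resulting SLPs.

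For the primitive root (second bullet), the plan follows the case distinction already used in the proof of Lemma \ref{l:todo}. First reduce $w_1$. If $w_1$ is cyclically reduced, then it is a power in $\F_\gen$ iff it is a power as a plain word, so its primitive root is the shortest period of $w_1$, which can be extracted in polynomial time by SLP period-finding. Otherwise, use LCE on the SLP of $w_1$ and that of $w_1\inV$ to locate the maximal suffix $r$ with $w_1\fgeq r\inV\,s\,r$ and $s$ cyclically reduced; recurse on $s$ to obtain an SLP for its primitive root $p$, and return an SLP for $r\inV\,p\,r$, which is primitive for $w_1$ by the same argument as in Lemma \ref{l:todo}.

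For the last bullet, $w_1\in g\cdot\angl{w_2}$ amounts to $w_1\fgeq g\cdot w_2^k$ for some $k\in\Z$ whose absolute value may be exponential in the SLP sizes and therefore must be kept in binary. The plan is to compute the primitive root $p$ of $w_2$ (by the second bullet), so that $\angl{w_2}=\angl{p}$, and then to determine $k$ by aligning a long enough periodic suffix of $w_1$ with powers of $p$ via LCE (an SLP for $p^N$ of polynomial size is constructed from $p$ by repeated squaring); the desired $g$ is then obtained as an SLP for $w_1\cdot w_2^{-k}$ followed by reduction via the third bullet, and existence is decided by checking whether this $g$ is independent of the ambiguity in $k$. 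The main obstacle throughout is the reduction-to-normal-form step: all other bullets rely on it as a black box, and the only genuinely delicate point is ensuring that intermediate SLPs stay polynomial-sized while repeated cancellations and exponent manipulations are carried out symbolically rather than explicitly.
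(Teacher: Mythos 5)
The paper never actually proves this lemma: it is stated as a recollection of known facts about \SLP-compressed words, with a pointer to \cite{DBLP:books/daglib/0038981}, and your sketch --- inverse by mirroring productions, free reduction by maximal boundary cancellation located with compressed LCE/equality queries, equality of reduced words via \cite{Plandowski,Lohrey2004}, primitive roots via cyclic reduction plus period computation on the cyclically reduced core, and the coset question via primitive roots and binary exponent arithmetic --- is precisely the standard toolkit that citation is meant to cover, so in substance you are reconstructing the intended argument rather than deviating from it.

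Two caveats are worth fixing. First, your complexity/termination argument for the reduction step is wrong as stated: ``each pass removes at least one symbol'' permits exponentially many passes, since the represented word may have length exponential in the \SLP\ size. The correct organization is a single bottom-up pass over the grammar: for a rule $X\to YZ$, assume \SLP s for the reduced words of $Y$ and $Z$ are already available; a concatenation of two reduced words can only cancel at the boundary, so one LCE computation (between the inverse of the reduced $Y$ and the reduced $Z$) determines the maximal cancellation, and the reduced word of $X$ is a prefix of the one concatenated with a suffix of the other. The genuinely delicate point, which you flag but only assert, is that these repeated prefix/suffix cuts keep the grammar polynomial-sized; that is exactly the composition-systems machinery (Hagenah) reproduced in \cite{DBLP:books/daglib/0038981}, and it is legitimate to cite it rather than reprove it, but it is the crux, not a side remark. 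Second, the last bullet as literally stated is trivially satisfiable by $g\fgeq w_1$; what is actually needed later (in the operations $\dunion$ and $\dconc$) is membership of a given word in a given coset $g\cdot\angl{w_2}$, equivalently whether a reduced word is a power of $w_2$. Your method --- compute the primitive root $p$ of $w_2$, extract the exponent by alignment, and reduce --- settles exactly that, so the final step about ``independence of the ambiguity in $k$'' should simply be dropped; the ambiguity in $g$ is precisely multiplication by elements of $\angl{w_2}$ and causes no problem.
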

In the following, we introduce \emph{deterministic linear tree transducers} which produce
outputs in the free group $\F_\gen$. For convenience, we follow the approach in 
\cite{SeidlMK18} where only \emph{total} deterministic transducers are considered ---
but equivalence is relative w.r.t.\ some \emph{top-down deterministic} domain automaton $B$.
A top-down deterministic automaton (\DTA) $B$ is a tuple $(H,\Sigma,\delta_B,h_0)$ where
$H$ is a finite set of states,
$\Sigma$ is a finite ranked alphabet,
$\delta_B:H\times\Sigma\to H^*$ is a partial function where $\delta_B(h,f)\in H^k$ if the
the rank of $f$ equals $k$, and
$h_0$ is the start state of $B$.
For every $h\in H$, we define the set $\dom(h)\subseteq\T_\Sigma$ by
$f(t_1,\ldots,t_m)\in\dom(h)$ iff $\delta_B(h,f) = h_1\ldots h_m$ and
$t_i\in\dom(h_i)$ for all $i=1,\ldots,k$. 
$B$ is called \emph{reduced} if $\dom(h)\neq\emptyset$ for all $h\in H$.
The \emph{language} $\L(B)$ accepted by $B$ is the set $\dom(h_0)$.
We remark that for every \DTA\ $B$ with $\L(B)\neq\emptyset$, 
a reduced \DTA\ $B'$ can be constructed in polynomial time with $\L(B) = \L(B')$.
Therefore, we subsequently assume w.l.o.g.\ that each \DTA\ $B$ is reduced.

A (total deterministic) \emph{linear tree transducer} with output in $\F_\gen$ (\LTW\ for short) 
is a tuple $M = (\Sigma, \gen, Q, S, R)$ where 
 $\Sigma$ is the ranked alphabet for the input trees, 
 $\gen$ is the finite (unranked) output alphabet,
 $Q$ is the set of \emph{states}, 
 $S$ is the \emph{axiom} of the form $u_0$ or $u_0\cdot q_0(x_0)\cdot u_1$ with $u_0, u_1 \in \F_\gen$ and
 $q_0\in Q$,
 and $R$ is the set of \emph{rules} which contains for each state $q\in Q$ and each
 input symbol $f\in\Sigma$, one rule of the form
\begin{equation}
{\small q(f(x_1, \ldots, x_m)) \to u_0\cdot q_1(x_{\sigma(1)})\cdot\ldots\cdot u_{n-1}\cdot
                                       q_n(x_{\sigma(n)})\cdot u_n}
	\label{eq:rule}
\end{equation}
 Here, $m$ is the rank of $f$, $n \leq m$, $u_0,\ldots,u_n\in\F_\gen$ and 
 $\sigma$ is an injective mapping from $\{1,\ldots, n\}$ to $\{1, \ldots, m\}$.
 The \emph{semantics} of a state $q$ is the function $\sem{q}:\T_\Sigma\to\F_\gen$ 
defined by
 $$\sem{q}(f(t_1,\ldots,t_m)) \fgeq u_0 \cdot\sem{q_1}(t_{\sigma(1)}) \cdot\ldots\cdot u_{n-1}\cdot \sem{q_n}(t_{\sigma(n)})\cdot u_n$$
 if there is a rule of the form \eqref{eq:rule}
 in $R$.
 Then the \emph{translation} of $M$ is the function $\sem{M}:\T_\Sigma\to\F_\gen$ defined by
 $\sem{M}(t) \fgeq u_0$ if the axiom of $M$ equals $u_0$, and
 $\sem{M}(t) \fgeq u_0\cdot \sem{q}(t)\cdot u_1$ if the axiom of $M$ is given by 
$u_0\cdot q(x_0)\cdot u_1$.

 \begin{example}\label{ex:LT}
 Let $\gen = \{a,b\}$.  As a running example we consider the \LTW\ $M$ with 
 input alphabet $\Sigma=\{f^2, g^1, k^0\}$ where the superscripts indicate the rank of the input symbols.
 $M$ has axiom $q_0(x_0)$ and the following rules
\[\arraycolsep10pt
{\small \begin{array}{lll}
q_0(f(x_1, x_2)) \to q_1(x_2) b q_2(x_1) & 
q_0(g(x_1)) \to q_0(x_1)  	& 
q_0(k) \to \ew 	\\
q_1(f(x_1, x_2)) \to q_0(x_1) q_0(x_2) & 
q_1(g(x_1)) \to ab q_1(x_1) 	& 
q_1(k) \to a 	\\
q_2(f(x_1, x_2)) \to q_0(x_1) q_0(x_2) &
q_2(g(x_1)) \to ab q_2(x_1) &
 q_2(k) \to ab
\end{array}}
\]
 \end{example}
\noindent
 Two \LTW s $M$, $M'$ are \emph{equivalent} relative to the \DTA\ $B$ iff their translations coincide
 on all input trees accepted by $B$, i.e., $\sem{M}(t)\fgeq\sem{M'}(t)$ for all $t\in\L(B)$.

To relate the computations of the \LTW\ $M$ and the domain automaton $B$,
we introduce the following notion.
A mapping $\iota:Q\to H$ from the set of states of $M$ to the set of states of $B$
is called \emph{compatible} if either the set of states of $M$ is empty (and thus the axiom of 
$M$ consists of an element of $\F_\gen$ only), or the following holds:
\begin{enumerate}
\item	$\iota(q_0) = h_0$;
\item	If $\iota(q) = h$, $\delta_B(h,f) = h_1\ldots h_m$, and
	there is a rule in $M$ of the form \eqref{eq:rule}
	then $\iota(q_i) = h_{\sigma(i)}$ for all $i=1,\ldots,n$;
\item	If $\iota(q) = h$ and $\delta_B(h,f)$ is undefined for some $f\in\Sigma$ of rank $m\geq 0$,
	then $M$ has the rule $q(f(x_1,\ldots,x_m))\to\bot$ for some dedicated symbol $\bot$
	which does not belong to $\gen$.
\end{enumerate}
\begin{lemma}\label{l:rho}
For an \LTW\ $M$ and a \DTA\ $B=(H,\Sigma,\delta_B,h_0)$, an \LTW\ $M'$  with 
a set of states $Q'$ together with a mapping $\iota:Q'\to H$ can be constructed
in polynomial time such that the following holds:
\begin{enumerate}
\item	$M$ and $M'$ are equivalent relative to $B$;
\item	$\iota$ is compatible.
\end{enumerate}
\end{lemma}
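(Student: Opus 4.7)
\bigskip

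\noindent
\textbf{Proof proposal.} The plan is a standard product construction between $M$ and $B$. I define $Q' = Q\times H$ and set $\iota(\langle q,h\rangle) = h$. For the axiom, if the axiom of $M$ is $u_0$ (no state), I keep it verbatim; otherwise the axiom $u_0\cdot q_0(x_0)\cdot u_1$ becomes $u_0\cdot \langle q_0,h_0\rangle(x_0)\cdot u_1$. For each $\langle q,h\rangle\in Q'$ and each $f\in\Sigma$ of rank $m$, I proceed in two cases:
\begin{compactitem}
\item If $\delta_B(h,f)$ is undefined, I install the rule $\langle q,h\rangle(f(x_1,\ldots,x_m))\to\bot$;
\item If $\delta_B(h,f)=h_1\ldots h_m$ is defined, I take the $M$-rule \eqref{eq:rule} for $q$ on $f$ and replace each occurrence of $q_i(x_{\sigma(i)})$ by $\langle q_i,h_{\sigma(i)}\rangle(x_{\sigma(i)})$, keeping the $u_j$ and the permutation $\sigma$ unchanged.
\end{compactitem}
The size of $M'$ is bounded by $|Q|\cdot|H|\cdot|\Sigma|$ times the maximal rule size of $M$, so the construction is clearly polynomial.

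Compatibility of $\iota$ is immediate from the definition: condition (1) is guaranteed by the new axiom; condition (2) holds because whenever the right-hand side of the new rule contains $\langle q_i,h_{\sigma(i)}\rangle(x_{\sigma(i)})$, we explicitly chose the second component to be $h_{\sigma(i)}$; condition (3) is taken care of by the $\bot$-rules in the first case.

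For equivalence relative to $B$, I would prove the stronger claim that for every $q\in Q$, every $h\in H$, and every $t\in\dom(h)$, the identity $\sem{\langle q,h\rangle}(t)\fgeq\sem{q}(t)$ holds, by straightforward structural induction on $t$. In the inductive step $t=f(t_1,\ldots,t_m)\in\dom(h)$, by definition $\delta_B(h,f)=h_1\ldots h_m$ is defined and $t_{\sigma(i)}\in\dom(h_{\sigma(i)})$, so the second case of the construction applies and the induction hypothesis can be used on each child. Specializing to $h=h_0$ and $t\in\L(B)=\dom(h_0)$, and combining with the identical axioms, yields $\sem{M'}(t)\fgeq\sem{M}(t)$ for all $t\in\L(B)$.

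I expect no substantial obstacle: the construction is the obvious product, and the two verifications (compatibility, semantic equivalence on $\L(B)$) are direct. The only mildly delicate point is treating the axiom case in which $M$ has no states — but then $\sem{M}$ is a constant and one simply keeps the same constant axiom in $M'$ with $Q'=\emptyset$, for which compatibility holds vacuously by the first clause of the definition. If a state-minimal or reachable-part construction is desired, one can additionally restrict $Q'$ to those $\langle q,h\rangle$ reachable from $\langle q_0,h_0\rangle$ via the combined transitions, which is again computable in polynomial time and preserves both properties.
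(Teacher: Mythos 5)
Your proposal is correct and follows essentially the same route as the paper: the same product construction $\langle q,h\rangle$ with $\bot$-rules for undefined transitions, compatibility by construction, and equivalence via structural induction showing $\sem{\langle q,h\rangle}(t)\fgeq\sem{q}(t)$ for $t\in\dom(h)$. The only cosmetic difference is that you take the full product $Q\times H$ (mentioning reachability restriction as optional) whereas the paper only adds the pairs reachable from $\langle q_0,h_0\rangle$; both are polynomial and equally valid.
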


\begin{example}\label{ex:compatible}
Let \LTW\ $M$ be defined as in Example~\ref{ex:LT}.
Consider \DTA\ $B$ with start state $h_0$ and the transition function
$\delta_B=\{(h_0,f) \mapsto h_1 h_1, (h_1,g) \mapsto h_1, (h_1,h) \to \ew\}$.
According to Lemma \ref{l:rho}, \LTW\ $M'$ for $M$ then is defined as follows.
$M'$ has axiom {\small$\angl{q_0, h_0}(x_0)$} and the rules
\[
{\small \begin{array}{l@{\qquad}l}
\multicolumn{2}{l}{\angl{q_0, h_0}(f(x_1, x_2)) \to \angl{q_1, h_1}(x_2)\, b\, \angl{q_2, h_1}(x_1)} \\
\angl{q_1, h_1}(g(x_1)) \to ab\, \angl{q_1, h_1}(x_1) 	& \angl{q_1, h_1}(k) \to a 	\\
\angl{q_2, h_1}(g(x_1)) \to ab\, \angl{q_2, h_1}(x_1) & \angl{q_2, h_1}(k) \to ab	\\
\end{array} }
\]
where the rules with left-hand sides {\small$\angl{q_0, h_0}(g(x_1))$, $\angl{q_0, h_0}(h)$, $\angl{q_1, h_1}(f(x_1, x_2))$, $\angl{q_2, h_1}(f(x_1, x_2))$},
all have right-hand-sides $\bot$.
The compatible map $\iota$ is then given by 
{\small$\iota = \{\angl{q_0,h_0}\mapsto h_0, \angl{q_1,h_1}\mapsto h_1,\angl{q_2,h_1}\mapsto h_1\}$}.
For convenience, we again denote the pairs $\angl{q_0,h_0},\angl{q_1,h_1},\angl{q_2,h_1}$ with
$q_0,q_1,q_2$, respectively.
\end{example}
Subsequently, we w.l.o.g.\ assume that each \LTW\ $M$ with corresponding
\DTA\ $B$ for its domain, comes with a compatible map $\iota$.
Moreover, we define for each state $q$ of $M$, the set $\L(q) = \{\sem{q}(t)\mid t\in\dom(\iota(q))\}$
of all outputs produced by state $q$ (on inputs in $\dom(\iota(q))$),
and $\L^{(i)}(q)=\{\sem{q}(t)\mid t\in\dom(\iota(q)),\depth(t)<i\}$ for $i\geq 0$.

Beyond the availability of a compatible map, we also require that all states of $M$ are 
\emph{non-trivial} (relative to $B$).
Here, a state $q$ of $M$ is called \emph{trivial} if $\L(q)$ contains a single element only.
Otherwise, it is called \emph{non-trivial}.
This property will be established in Theorem \ref{t:order}.

\ignore{
\begin{theorem}\label{t:trivial}
Consider an \LTW\ $M$ and a \DTA\ $B$ 
with compatible map $\iota$.
\begin{enumerate}
\item	For every state $q$ of $M$, it can be decided in polynomial time whether or not 
	$q$ is trivial, and if so, a representation of $\sem{q}(t)$, $t\in\dom(\iota(q))$
        can be constructed in polynomial time.
\item	If the state in the axiom of $M$ is non-trivial, 
	an \LTW\ $M'$ with compatible map $\iota'$ can be constructed in polynomial time such that
	\begin{enumerate}
	\item	$M$ and $M'$ are equivalent relative to $B$;
	\item	all states of $M'$ are non-trivial.
	\end{enumerate}
\end{enumerate}
\end{theorem}

\begin{proof}
The construction is based on a form of \emph{constant propagation} known from program optimization.
This means that we construct a complete lattice ${\mathbb C}$ whose elements consist of 
$\{\bot,\top\}\cup\F_\gen$ for a new element $\top$. The ordering on ${\mathbb C}$ is given
by $\bot\sqsubseteq w\sqsubseteq\top$ for all $w\in\F_\gen$. In particular,
the least upper bound $w_1\sqcup w_2$ of $w_1,w_2\in\F_\gen$ equals $w_1$ whenever
$w_1\fgeq w_2$ and is $\top$ otherwise.
From the set $R$ of rules of $M$, we construct a set $\cal C$ of constraints as follows.
We have one unknown $X_q$ for each state $q$ of $M$, and for each rule of the form
\eqref{eq:rule}
where $\delta(\iota(q),f)$ is defined, 
the constraint
\begin{equation}
{\small X_q\sqsupseteq u_0\cdot X_{q_1}\cdot \ldots\cdot u_{n-1}
                                    \cdot X_{q_n}\cdot u_n}	\label{eq:constraint}
\end{equation}
The right-hand side of that constraint can be considered as a function
which constructs from the values of $X_{q_1},\ldots,X_{q_n}$ (in ${\mathbb C}$) 
a contribution to the value of $X_{q}$ on the left-hand side. That function
is obtained by extending the group operator `$\cdot$'' to $\mathbb C$ by
\[
{\small\begin{array}{lllll}
\bot\cdot x &=& x\cdot \bot &=& \bot	\\
\top\cdot x &=& x\cdot \top &=& \top\qquad(x\neq\bot)
\end{array}}
\]
Let $X_{q}^{(i)}$ for $i\geq 0$ as follows. $X_q^{(0)} = \bot$, and for $i>0$, we set
$X_q^{(i)}$ as the least upper bound of the values obtained from
the constraints with left-hand side $X_q$ of the form \eqref{eq:constraint}
by replacing the unknowns $X_{q_j}$ on the right-hand side with the values $X_{q_j}^{(i-1)}$.
By induction, we verify that for all $i\geq 0$, the following holds:
\begin{enumerate}
\item	If $X_q^{(i)} = \bot$, then $\L^{(i)}(q)=\emptyset$;
\item	If $X_q^{(i)} = w\in\F_\gen$, then $\L^{(i)}(q)=\{w\}$;
\item	If $X_q^{(i)} = \top$, then $\L^{(i)}(q)$ contains at least two elements.
\end{enumerate}
The crucial point thereby is the observation that due to the existence of inverses in $\F_\gen$,
constant values can only be obtained from constant values. 

Since the right-hand sides of constraints represent \emph{monotonic} functions,
we obtain for each state $q$ an increasing sequence of values $X_q^{(i)}$ in $\mathbb C$.
Since each of these values can strictly increase at most twice, we have 
that $X_q^{(2N)} = X_q^{(i)}$ for all $i\geq 2N$ if $N$ is the number of
states of $M$.
Moreover, the values $X_q^{(2N)}$ for all states $q$ of $M$ can be computed in polynomial time.
We conclude that state $q$ is trivial with output $w$ iff $X_q^{(2N)}=w$.

Now assume that we are given all values $X_q^{(2N)}$, and assume
that $X_{q_0}$ is non-trivial. Then $M'$ is constructed as follows.
The set of states of $M'$ consists of all states of $M$ which are non-trivial.
Likewise, the mapping $\iota'$ of $M'$ is obtained by restricting the mapping $\iota$ to
the subset of non-trivial states.
The axiom of $M'$ equals the axiom of $M$. And for every rule of $M$ of the form \eqref{eq:rule}
with $q$ non-trivial, $M'$ has the rule 
$$
{\small q(f(x_1,\ldots,x_m))\to u_0\cdot g_1\cdot\ldots\cdot u_{n-1}\cdot g_n\cdot u_n}
$$
where for $i=1,\ldots,n$, $g_i$ equals $q_i(x_{\sigma(i)}$ if $q_i$ is non-trivial,
and equals $X_{q_i}^{(2N)}$ otherwise.
\end{proof}

\noindent
}

\section{Deciding Equivalence}\label{sec:equiv}\label{s:equiv}

In the first step, we show that equivalence relative to the \DTA\ $B$ of \emph{same-ordered} \LTW s is decidable.
\ignore{
, i.e., \LTW s that process their input in the same way,
can be decided in polynomial time by a reduction to the morphism equivalence problem of 
context-free grammars over the free group. 
A similar construction was used in \cite{Staworko2009} to decide equivalence of
\emph{sequential} tree-to-word transducers considering word equivalence.
Then, we define an \emph{order} for \LTW s that guarantees that two equivalent \LTW s are same-ordered if both are ordered.
To obtain our main result we therefore show that from every \LTW\ an equivalent \LTW\ can be constructed that is ordered.
This procedure is based on \cite{Palenta2016} where the equivalence of linear tree-to-word transducers
concerning word equivalence was shown
to be decidable in polynomial time via a partial normal form.

First, we present a \emph{direct} reduction for \emph{same-ordered} \LTW s 
to the morphism equivalence problem over the free group.
}
For a \DTA\ $B$,
consider the \LTW s $M$ and $M'$ with compatible mappings $\iota$ and $\iota'$, respectively.
$M$ and $M'$ are \emph{same-ordered} relative to $B$
if they process their input trees in the same order.
We define set of pairs $\angl{q,q'}$ of \emph{co-reachable} states of $M$ and $M'$.
Let $u_0\cdot q_0(x_1)\cdot u_1$ and
$u'_0\cdot q'_0(x_1)\cdot u'_1$ be the axioms of $M$ and $M'$, respectively,
where $\iota(q_0) = \iota'(q'_0)$ is the start state of $B$.
Then the pair $\angl{q_0,q'_0}$ is co-reachable.
Let $\angl{q,q'}$ be a pair of co-reachable states. Then $\iota(q)=\iota'(q')$ should
hold. 
For $f \in \Sigma$,
assume that $\delta_B(\iota(q),f)$ is defined.
Let
\begin{equation}
{\small\begin{array}{lll}
q(f(x_1, \ldots, x_m)) &\to& u_0 q_1(x_{\sigma(1)}) u_1 \ldots u_{n-1} q_n(x_{\sigma(n)}) u_n\\
q'(f(x_1, \ldots, x_m)) &\to& u'_0 q'_1(x_{\sigma'(1)}) u'_1 \ldots u'_{n-1} q'_n(x_{\sigma'(n')}) u'_{n'}
\end{array}} \label{eq:coreach}
\end{equation}
be the rules of $q, q'$ for $f$, respectively.
Then $\angl{q_{j},q'_{j'}}$ is co-reachable whenever $\sigma(j)=\sigma'(j')$ holds.
In particular, we then  have $\iota(q_j) = \iota'(q'_{j'})$.

The pair $\angl{q,q'}$ of co-reachable states is called same-ordered, if
for each corresponding pair of rules \eqref{eq:coreach},
$n = n'$ and $\sigma = \sigma'$.
Finally, $M$ and $M'$ are \emph{same-ordered} if for every co-reachable pair $\angl{q,q'}$ 
of states of $M, M'$, and every $f\in\Sigma$,
each pair of rules \eqref{eq:coreach} is same-ordered
whenever $\delta_B(\iota(q),f)$ is defined.

Given that the \LTW s $M$ and $M'$ are same-ordered relative to $B$, 
we can represent the set of pairs of runs of 
$M$ and $M'$ on input trees by means of a single context-free grammar $G$.
The set of nonterminals of $G$ consists of a distinct start nonterminal $S$ together with 
all co-reachable pairs $\angl{q,q'}$ of states $q,q'$ of $M,M'$, respectively.
The set of terminal symbols $T$ of $G$ is given by $\{a,a\inV,\bar a,\bar a\inV\mid a\in\gen\}$
for fresh distinct symbols $\bar a,\bar a\inV,a\in\gen$.
Let $\angl{q,q'}$ be a co-reachable pair of states of $M, M'$,
and $f \in\Sigma$ such that $\delta_B(\iota(q),f)$ is defined.
For each corresponding pair of rules \eqref{eq:coreach},
$G$ receives the rule
$$\angl{q,q'} \to u_0 \bar u'_0 \angl{q_1, q'_1} u_1 \bar u'_1 \ldots u_{n-1} \bar u'_{n-1} \angl{q_n,q'_n} u_n \bar u'_n$$
where $\bar u'_i$ is obtained from $u'_i$ by replacing each output symbol
$a\in\gen$ with its barred copy $\bar a$ as well as each inverse $a\inV$ with its barred copy $\bar a\inV$.
For the axioms $u_0 q(x_1) u_1$ and $u'_0 q'(x_1) u'_1$
of $M,M'$, respectively,
we introduce the rule
$S \to u_0 \bar u'_0 \angl{q,q'} u_1 \bar u'_1$ where again $\bar u'_i$ are the
barred copies of $u'_i$.
We define morphisms $f,g:T^*\to\F_\gen$ by
\[\arraycolsep5pt
{\small\begin{array}{lll}
f(a) = a & f(a\inV) = a\inV & f(\bar a)=f(\bar a\inV)=\ew \\
g(\bar a) = a & g(\bar a\inV) = a\inV & g(a)=g(a\inV)=\ew \\
\end{array}}
\]
for $a\in\gen$.
Then $M$ and $M'$ are equivalent relative to $B$ iff
$g(w) \fgeq f(w)$ for all $w \in \L(G)$. 
Combining Plandowski's polynomial construction of a test set for a context-free language to check 
morphism equivalence over finitely generated free groups \cite[Theorem 6]{Plandowski}, with
Lohrey's polynomial algorithm for checking equivalence of \SLP s over the free group 
\cite{Lohrey2004}, we deduce that
the equivalence of the morphisms $f$ and $g$ on all words generated by the context-free grammar $G$,
is decidable in polynomial time.
Consequently, we obtain:

\begin{corollary}\label{cor:equiSameOrdered}
Equivalence of same-ordered \LTW s relative to a \DTA\ $B$ is decidable in polynomial time. 
\qed
\end{corollary}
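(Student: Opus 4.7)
The plan is to reduce equivalence of same-ordered \LTW s to the morphism equivalence problem on a context-free language with outputs in $\F_\gen$, and then invoke known polynomial-time procedures for the latter.

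First I would build a product context-free grammar $G$ whose nonterminals are the co-reachable pairs $\angl{q,q'}$ of states of $M$ and $M'$, together with a fresh start symbol $S$. The same-ordered assumption is essential here: for every input symbol $f$ on which $\delta_B(\iota(q),f)$ is defined, the two corresponding rules of $M$ and $M'$ agree on the number $n$ of child calls and on the permutation $\sigma$. I can therefore write a single production of $G$ whose right-hand side interleaves the output pieces $u_0,\dots,u_n$ of $M$ with the output pieces $u'_0,\dots,u'_n$ of $M'$, placing the pair $\angl{q_j,q'_j}$ between the corresponding blocks. To keep the two contributions separable I replace the output letters of $M'$ by fresh barred copies $\bar a,\bar a\inV$ so that in every word $w\in\L(G)$ the contributions of $M$ and $M'$ use disjoint sub-alphabets.

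Next I would define the two morphisms $f,g:T^*\to\F_\gen$ projecting onto the unbarred, respectively barred, letters, as in the excerpt. By induction on the derivation, a word $w\in\L(G)$ corresponds exactly to a pair of runs of $M$ and $M'$ on some common input tree $t\in\L(B)$ with $f(w)\fgeq\sem{M}(t)$ and $g(w)\fgeq\sem{M'}(t)$, and conversely every such input tree arises this way. Consequently $M$ and $M'$ are equivalent relative to $B$ if and only if the morphisms $f$ and $g$ coincide on all of $\L(G)$.

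Finally I would invoke Plandowski's polynomial-time construction of a test set for morphism equivalence on a context-free language, which is valid even when the morphisms take values in a finitely generated free group, and then check the finitely many resulting equations in $\F_\gen$ between \SLP-represented elements using Lohrey's polynomial algorithm (Lemma~\ref{l:SLPops}). The point I expect to require the most care is the compatibility of the reduction with the free-group semantics: because the letters contributed by $M$ and those contributed by $M'$ inhabit disjoint sub-alphabets in the joint word $w$, no cancellation inside $\F_\gen$ can mix the two sides before $f$ and $g$ are applied, so the reduction remains sound in the free-group setting.
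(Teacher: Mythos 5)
Your proposal is correct and follows essentially the same route as the paper: the product grammar over co-reachable pairs with barred copies of $M'$'s output letters, the two projection morphisms, and the combination of Plandowski's test-set construction over the free group with Lohrey's \SLP\ equivalence check. No significant differences from the paper's argument.
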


\noindent
Next, we observe that for every \LTW\ $M$ with compatible map $\iota$ and
non-trivial states only, a \emph{canonical} ordering can be established.
We call $M$ \emph{ordered} (relative to $B$) if
for all rules of the form \eqref{eq:rule},
with $\L(q_i)\cdot u_i\cdot \ldots\cdot u_{j-1}\cdot \L(q_j) \subseteq v\cdot \angl{p}$, 
$p \in \F_\gen$ the ordering $\sigma(i) < \ldots < \sigma(j)$ holds.
Here we have naturally extended the operation ``$\cdot$'' to sets of elements.

We show that two ordered \LTW s, when they are equivalent, are necessarily \emph{same-ordered}.
The proof of this claim is split in two parts. 
First, we prove that the \emph{set} of indices of subtrees processed by equivalent co-reachable
states are identical and second, that the order is the same.

\begin{lemma}\label{l:sameCalls}\label{l:orderedSame}
Let $M, M'$ be \LTW s with compatible maps
$\iota$ and $\iota'$, respectively, and non-trivial states only so that $M$ and $M'$ 
are equivalent relative to the \DTA\ $B$.
Let $\angl{q,q'}$ be a pair of co-reachable states of $M$ and $M'$.
Assume that 
$\delta_B(\iota(q),f)$ is defined for some $f\in\Sigma$ and consider the
corresponding pair of rules \eqref{eq:coreach}.
Then the following holds:
\begin{enumerate}
\item
$
\{\sigma(1),\ldots, \sigma(n)\} = \{\sigma'(1),\ldots, \sigma'(n')\}
$;
\item	$\sigma = \sigma'$.
\end{enumerate}
\end{lemma}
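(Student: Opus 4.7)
The plan is to prove both parts by contradiction, exploiting that each non-trivial state admits at least two distinct outputs together with cancellativity in the free group, and for Part 2 the commutation principle provided by Lemma~\ref{l:periodicGroup}.

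For Part 1, suppose some input position $k$ lies in $\{\sigma(j)\}$ but not in $\{\sigma'(j')\}$. Let $a$ be the rule position of $M$ with $\sigma(a) = k$. Since $B$ is reduced, admissible trees exist at all input positions, and since $q_a$ is non-trivial, there are $t_k^{(1)}, t_k^{(2)} \in \dom(h_k)$ producing distinct outputs $\sem{q_a}(t_k^{(1)}) \neq \sem{q_a}(t_k^{(2)})$. Fixing every other subtree and a context reaching the co-reachable pair $\angl{q,q'}$, the identity $\sem M = \sem{M'}$ evaluates to $P \cdot \sem{q_a}(t_k^{(\ell)}) \cdot S \fgeq C$ for $\ell = 1, 2$, with $P, S, C$ independent of $t_k$ (as $t_k$ does not occur on $M'$'s side). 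Cancellativity then forces $\sem{q_a}(t_k^{(1)}) \fgeq \sem{q_a}(t_k^{(2)})$, a contradiction.

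For Part 2, assume Part 1, so that there is a permutation $\pi$ with $\sigma' = \sigma \circ \pi$, and suppose $\pi \neq \mathrm{id}$. Its inverse $\rho := \pi^{-1}$ has a descent $\rho(a) > \rho(a+1)$; put $k := \sigma(a)$ and $l := \sigma(a+1)$. Then $t_k, t_l$ are processed by $M$ consecutively at positions $a, a+1$, but by $M'$ in the opposite order at non-adjacent positions $\rho(a+1) < \rho(a)$. Fixing admissible trees at every input position different from $k, l$, the identity $\sem{M} = \sem{M'}$ on this family of inputs reads
\[
P \cdot \sem{q_a}(t_k) \cdot u_a \cdot \sem{q_{a+1}}(t_l) \cdot R \fgeq P' \cdot \sem{q'_{\rho(a+1)}}(t_l) \cdot Q' \cdot \sem{q'_{\rho(a)}}(t_k) \cdot R'
\]
with $P, R, P', Q', R'$ constant. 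Picking two admissible values of each of $t_k$ and $t_l$ (possible by non-triviality of $q_a, q_{a+1}$), the four resulting equations match the hypothesis of Lemma~\ref{l:periodicGroup}, which yields a primitive $p$ and integers $r, s$ with $\sem{q_a}(t_k^{(1)}) \cdot u_a \fgeq \sem{q_a}(t_k^{(2)}) \cdot u_a \cdot p^r$ and $\sem{q_{a+1}}(t_l^{(1)}) \fgeq p^s \cdot \sem{q_{a+1}}(t_l^{(2)})$. Extending pairwise comparisons to all of $\L(q_a), \L(q_{a+1})$ by uniqueness of primitive roots in $\F_\gen$, one obtains $\L(q_a) \cdot u_a \cdot \L(q_{a+1}) \subseteq v \cdot \angl{p'}$ for some primitive $p'$. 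Since $M$ is ordered, this forces $\sigma(a) < \sigma(a+1)$, i.e., $k < l$.

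Dually, Lemma~\ref{l:periodicGroup} also constrains $\sem{q'_{\rho(a+1)}}(t_l)$ and $\sem{q'_{\rho(a)}}(t_k)$ to vary in cosets of cyclic groups generated by conjugates of $p$, so for fixed intermediate subtrees the $M'$-side block $\sem{q'_{\rho(a+1)}}(t_l) \cdot Q' \cdot \sem{q'_{\rho(a)}}(t_k)$ lies in a single coset. Additionally letting the intermediate subtrees vary, which is legitimate because they correspond to input positions sitting inside $P$ or $R$ on $M$'s side and thus are tied to $M$'s output through the same identity, one upgrades the coset condition to $\L(q'_{\rho(a+1)}) \cdot u'_{\rho(a+1)} \cdot \ldots \cdot u'_{\rho(a)-1} \cdot \L(q'_{\rho(a)}) \subseteq v'' \cdot \angl{p''}$. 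Ordered-ness of $M'$ then yields $\sigma'(\rho(a+1)) < \ldots < \sigma'(\rho(a))$, hence $l < k$, contradicting $k < l$. The main obstacle I expect is precisely this last promotion step: lifting the coset condition from a four-tuple application of Lemma~\ref{l:periodicGroup} at fixed intermediates to the full product ranging over every intermediate state's output. It demands that the primitives obtained from different pairwise comparisons agree up to inversion, which in turn rests on the uniqueness of primitive roots and the rigidity of cyclic subgroups of $\F_\gen$.
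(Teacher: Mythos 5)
Your Part~1 is essentially the paper's argument and is fine. For Part~2 your overall strategy (adjacent descent in $M$'s rule, then playing the orderedness of $M$ and of $M'$ against each other) is viable, but the proof as written has a genuine gap, and it is exactly the step you flag yourself: promoting the coset condition obtained at \emph{fixed} intermediate subtrees to periodicity of the full block $\L(q'_{\rho(a+1)})\cdot u'_{\rho(a+1)}\cdot\ldots\cdot u'_{\rho(a)-1}\cdot\L(q'_{\rho(a)})$. Orderedness only speaks about products of the languages $\L(q'_j)$, so without this promotion you cannot conclude $l<k$; and your first half ($k<l$ from orderedness of $M$) is not by itself a contradiction. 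The promotion is not automatic and is not carried out: you would need to (i) justify the ``dual'' application of Lemma~\ref{l:periodicGroup} with the two sides of the identity swapped, which also requires that the two chosen trees yield \emph{distinct} outputs of $q'_{\rho(a)}$ and of $q'_{\rho(a+1)}$ (this does follow, by cancellation, from distinctness on $M$'s side, but must be said); (ii) observe that the primitive delivered for each fixed choice of intermediates is pinned up to inversion by the fixed nontrivial element $\sem{q'_{\rho(a)}}(t_k^{(1)})\cdot\sem{q'_{\rho(a)}}(t_k^{(2)})\inV$, which does not depend on the intermediates; and (iii) compare two attainable values $Q'_1,Q'_2$ of the intermediate block via $(y'_2)\inV\cdot y'_1\fgeq Q'_1\cdot p^{r_1}\cdot (Q'_1)\inV\fgeq Q'_2\cdot p^{r_2}\cdot (Q'_2)\inV$ with nonzero exponents and invoke Lemma~\ref{l:todo} to get $(Q'_1)\inV\cdot Q'_2\in\angl{p}$, whence the whole block lies in one coset $v''\cdot\angl{p''}$. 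These are precisely the manipulations the paper performs (on its $M$-side block) in the second half of its proof; you name the right ingredients (``uniqueness of primitive roots'', ``rigidity of cyclic subgroups''), but naming them is not a proof, and this step is the mathematical core of the lemma.

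For comparison, the paper avoids ever needing an $M'$-side block: it takes the \emph{minimal} index $k$ with $\sigma(k)\neq\sigma'(k)$, w.l.o.g.\ $\sigma'(k)<\sigma(k)$, and finds $\ell,\ell'>k$ with $\sigma(\ell)=\sigma'(k)$ and $\sigma'(\ell')=\sigma(k)$; then the two endpoints of the block $[k,\ell]$ of $M$'s own rule already violate the order, so periodicity of that single block (established with the same fixed-pair-plus-Lemma~\ref{l:todo} technique applied to $M$'s intermediates) contradicts orderedness of one transducer only. Your adjacent-descent choice makes the $M$-side block trivial but shifts the entire intermediate-states difficulty to the $M'$ side and needs orderedness of both transducers; it can be completed along the lines above, but as submitted the argument stops short of that completion.
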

\begin{proof}
Since $\angl{q,q'}$ is a co-reachable pair of states, there are elements
$\alpha, \alpha', \beta, \beta' \in \F_\gen$ such that
$$
\alpha\cdot \sem{q}(t)\cdot\beta \fgeq \alpha'\cdot\sem{q'}(t)\cdot\beta'
$$
holds for all $t \in \dom(\iota(q))$.
Consider the first statement.
Assume for a contradiction that $q_k(x_{j})$ occurs
on the right-hand side of the rule for $q$ but $x_{j}$ does not occur on the 
right-hand side of the rule for $q'$.
Then, there are input trees $t = f(t_1,\ldots, t_m)$ and $t' = f(t'_1,\ldots, t'_m)$,
both in $\dom(\iota(q))$,
such that $\sem{q_k}(t_j) \not\fgeq \sem{q_k}(t'_j)$ and $t_i = t'_i$ for all $i \neq j$.
Moreover, there are $\mu_1, \mu_2\in\F_\gen$ s.t.
$$
{\small \alpha\cdot\sem{q}(t)\cdot\beta 
	\fgeq \alpha\cdot\mu_1\cdot\sem{q_k}(t_{j})\cdot\mu_2\cdot\beta 
	\not\fgeq \alpha\cdot\mu_1\cdot\sem{q_k}(t'_{j})\cdot\mu_2\cdot\beta 
	\fgeq \alpha\cdot\sem{q}(t')\cdot\beta}
$$
But then,
$$
{\small \alpha\cdot\sem{q}(t)\cdot\beta \fgeq \alpha'\cdot\sem{q'}(t)\cdot\beta'
	\fgeq \alpha'\cdot\sem{q'}(t')\cdot\beta' 
	\fgeq \alpha\cdot\sem{q} (t')\cdot\beta}
$$
--- a contradiction. By an analogous argument for some $x_j$ only occurring in the right-hand
side of the rule for $q'$ the first statement follows.

Assume for contradiction that the mappings $\sigma$ and $\sigma'$ in the corresponding rules
\eqref{eq:coreach} differ.
Let $k$ denote the minimal index so that $\sigma(k) \neq \sigma'(k)$.
W.l.o.g., we assume that $\sigma'(k) < \sigma(k)$.
By the first statement, $n=n'$ and $\{\sigma(1),\ldots, \sigma(n)\} = \{\sigma'(1),\ldots, \sigma'(n)\}$.
Then there are $\ell, \ell' > k$ such that
\[
\sigma'(k)=\sigma(\ell) < \sigma(k)=\sigma'(\ell')
\]
Let $t = f(t_1,\ldots, t_n) \in \dom(\iota(q))$ be an input tree.
For that we obtain 
\[
{\small\begin{array}[t]{l}
\mu_0 := u_0\cdot\sem{q_1}(t_{\sigma(1)})\cdot\ldots\cdot u_{k-1} \\
\mu_1 := u_k\cdot\sem{q_{k+1}}(t_{\sigma(k+1)})\cdot\ldots\cdot u_{\ell-1} \\
\mu_2 := u_\ell\cdot\sem{q_\ell}(t_{\sigma(\ell)})\cdot\ldots\cdot u_n 
\end{array}\quad
\begin{array}[t]{l}
\mu'_0 := u'_0\cdot\sem{q'_1}(t_{\sigma'(1)})\cdot\ldots\cdot u'_{k-1} \\
\mu'_1 := u'_k\cdot\sem{q'_{k+1}}(t_{\sigma'(k+1)})\cdot\ldots\cdot u'_{\ell'-1} \\
\mu'_2 := u'_{\ell'}\cdot\sem{q'_{\ell'}}(t_{\sigma'(\ell')})\cdot\ldots\cdot u'_n
\end{array}}
\]
Then for all input trees $t' \in \dom(\iota(q_k))$, $t'' \in \L(\dom(\iota(q'_k))$,
{\small$$
\alpha\cdot\mu_0\cdot\sem{q_{k}}(t')\cdot
           \mu_1\cdot\sem{q_{\ell}}(t'')\cdot
           \mu_2\cdot\beta
\fgeq \alpha'\cdot\mu'_0\cdot\sem{q'_{k}}(t'')\cdot
		  \mu'_1\cdot\sem{q'_{\ell'}}(t')\cdot
		  \mu'_2\cdot\beta'
$$}
Let $\gamma' = \mu_0^-\alpha\inV\alpha'\mu'_0$. Then
{\small$$
\sem{q_k}(t')\cdot\mu_1\cdot\sem{q_\ell}(t'')\cdot\mu_2\cdot\beta 
\fgeq \gamma'\cdot\sem{q'_k}(t'')\cdot\mu'_1\cdot
		  \sem{q'_{\ell'}}(t')\cdot\mu'_2\cdot\beta'
$$}
By Lemma~\ref{l:periodicGroup}, we obtain that for all
$w_1, w_2 \in \L(q_k)$ and $v_1, v_2 \in \L(q_\ell)$,
$w_2\inV\cdot w_1 \in \mu_1\cdot\angl{p}\cdot\mu_1\inV$ and
$v_1\cdot v_2\inV \in \angl{p}$ for some primitive $p$.

If $\ell = k+1$, i.e., there is no further state
between $q_k(x_{\sigma(k)})$ and $q_\ell(x_{\sigma(\ell)})$, then 
$\mu_1 \fgeq u_k$,
$\L(q_k) \subseteq w\cdot u_k\cdot \angl{p}\cdot u_k\inV$ and
$\L(q_\ell) \subseteq \angl{p}\cdot v$
for some fixed $w \in \L(q_k)$ and $v \in \L(q_\ell)$.
As $\sigma(k) > \sigma'(k) = \sigma(\ell)$, this contradicts $M$ being ordered.

\noindent For the case that there is at least one occurrence of a state between
$q_k(x_{\sigma(k)})$ and $q_\ell(x_{\sigma(\ell)})$,
we show that for all 
$\alpha_1, \alpha_2 \in u_k\cdot\L(q_{k+1})\cdot\ldots\cdot u_{\ell-1} \fgeq: \hat{L}$,
$\alpha_1^-\alpha_2 \in \angl{p}$ holds.
We fix $w_1, w_2 \in \L(q_k)$ and $v_1,v_2\in\L(q_\ell)$ with $w_1 \neq w_2$ and $v_1\neq v_2$.
For every $\alpha \in \hat{L}$, we find by Lemma~\ref{l:periodicGroup},
primitive $p_\alpha$ and exponent $r_\alpha\in\Z$ such that 
$v_1\cdot v_2\inV \fgeq p_\alpha^{r_\alpha}$ holds.
Since $p_\alpha$ is primitive, this means that $p_\alpha\fgeq p$ or $p_\alpha\fgeq p\inV$.
Furthermore, there must be some exponent $r'_\alpha$ such that
 $w_1\inV\cdot w_2 \fgeq \alpha\cdot p^{r'_\alpha}\cdot\alpha\inV$.
For $\alpha_1,\alpha_2 \in \hat{L}$, we therefore have that
$${\small
p^{r'_{\alpha_1}} \fgeq (\alpha_1\inV\cdot\alpha_2)\cdot p^{r'_{\alpha_2}}\cdot
		(\alpha_1\inV\cdot\alpha_2)\inV
}$$
Therefore by Lemma~\ref{l:todo},
$\alpha_1\inV\cdot\alpha_2 \in \angl{p}$.
Let us fix some $w_k \in \L(q_k)$,
$\alpha \in \hat{L} \fgeq u_k\cdot\L(q_{k+1})\cdot\ldots\cdot u_{\ell-1}$,
and $w_l \in \L(q_l)$.
Then $\L(q_k) \subseteq w_k\cdot\alpha\cdot\angl{p}\cdot\alpha\inV$, 
     $\hat{L} \subseteq \alpha\cdot\angl{p}$ and
     $\L(q_l) \subseteq \angl{p}\cdot w_l$.
Therefore, 
$${\small
\L(q_k)\cdot u_k\cdot\ldots\cdot\L(q_\ell) \subseteq 
w_k\cdot\alpha\cdot\angl{p}\cdot\alpha\inV\cdot\alpha\cdot\angl{p}\cdot\angl{p}\cdot w_l 
\fgeq w_k\cdot\alpha\cdot\angl{p}\cdot w_l
}$$
As $\sigma(k) > \sigma'(k) = \sigma(\ell)$, this again contradicts $M$ being ordered.
\end{proof} 
\noindent
It remains to show that every \LTW\ can be ordered in polynomial time.
For that, we rely on the following characterization.

\begin{lemma}\label{l:ultimatelyPeriodic}
Assume that $L_1,\ldots,L_n$ are neither empty nor singleton subsets of $\F_\gen$
and $u_1, \ldots, u_{n-1} \in \F_\gen$.
Then there are $v_1,\ldots,v_n\in\F_\gen$ such that
\begin{equation}
{\small
L_1\cdot u_1\cdot \ldots\cdot 
L_{n-1}\cdot u_{n-1}\cdot 
L_n \subseteq v\cdot \angl{p}} 
\label{eq:period}
\end{equation}
holds if and only if for $i=1,\ldots,n$,
$L_i \subseteq v_i\cdot\angl{p_i}$ with 
$$
{\small\begin{array}{lll}
p_n &\fgeq& p\\
p_i &\fgeq& (u_i\cdot v_{i+1})\cdot p_{i+1}\cdot (u_i\cdot v_{i+1})\inV
	\quad\text{\normalsize for }i<n
\end{array}}
$$
and 
\begin{equation}
{\small v\inV \cdot v_1\cdot u_1\cdot \ldots \cdot v_{n-1}\cdot u_{n-1}\cdot v_n\in\angl{p}}
\label{eq:periodInclusion}
\end{equation}
\end{lemma}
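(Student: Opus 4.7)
The plan is to prove the biconditional by treating the two directions separately, with sufficiency being essentially a symbolic calculation in $\F_\gen$ and necessity requiring a careful choice of representatives together with one application of the fact that conjugation is a group homomorphism.

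For sufficiency, I would assume all structural conditions and take an arbitrary product $w_1 \cdot u_1 \cdots u_{n-1} \cdot w_n$ with $w_i \in L_i$. By the inclusion $L_i \subseteq v_i \cdot \angl{p_i}$ each factor can be written as $w_i \fgeq v_i \cdot p_i^{k_i}$ for some $k_i \in \Z$. The recurrence $p_i \fgeq (u_i \cdot v_{i+1}) \cdot p_{i+1} \cdot (u_i \cdot v_{i+1})\inV$ rearranges to the commutation law $p_i^{k_i} \cdot (u_i \cdot v_{i+1}) \fgeq (u_i \cdot v_{i+1}) \cdot p_{i+1}^{k_i}$. Applying this from left to right, each $p_i^{k_i}$ is pushed past one block $u_i \cdot v_{i+1}$ and merged into the adjacent exponent. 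After $n-1$ such applications the product collapses to $v_1 \cdot u_1 \cdot v_2 \cdots u_{n-1} \cdot v_n \cdot p^{k_1 + \cdots + k_n}$, and this lies in $v \cdot \angl{p}$ by \eqref{eq:periodInclusion}.

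For necessity, I would fix representatives $v_i \in L_i$ (possible since each $L_i$ is nonempty) and introduce the left and right contexts $\ell_i \fgeq v_1 \cdot u_1 \cdots v_{i-1} \cdot u_{i-1}$ and $r_i \fgeq u_i \cdot v_{i+1} \cdots u_{n-1} \cdot v_n$, with the conventions $\ell_1 \fgeq r_n \fgeq \ew$. For any $w_i \in L_i$, the hypothesis \eqref{eq:period} gives $\ell_i \cdot w_i \cdot r_i \fgeq v \cdot p^{k(w_i)}$ for some integer $k(w_i)$. Comparing with the instantiation at $w_i = v_i$, cancellation on the left and right yields $v_i\inV \cdot w_i \fgeq r_i \cdot p^{k(w_i)-k(v_i)} \cdot r_i\inV \fgeq (r_i \cdot p \cdot r_i\inV)^{k(w_i)-k(v_i)}$. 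Setting $p_i \fgeq r_i \cdot p \cdot r_i\inV$ establishes $L_i \subseteq v_i \cdot \angl{p_i}$. The boundary case $p_n \fgeq p$ is immediate from $r_n \fgeq \ew$, and the factorization $r_i \fgeq (u_i \cdot v_{i+1}) \cdot r_{i+1}$ gives the conjugation recurrence for $p_i$ directly. Finally, instantiating $w_i = v_i$ for all $i$ in the original inclusion shows $v_1 \cdot u_1 \cdots u_{n-1} \cdot v_n \in v \cdot \angl{p}$, which is \eqref{eq:periodInclusion}.

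Neither direction presents a serious obstacle; both reduce to symbolic manipulations inside $\F_\gen$, using only that $x \mapsto g \cdot x \cdot g\inV$ is a group homomorphism and hence commutes with taking powers. The only point requiring attention is the bookkeeping of indices in the chain of conjugations, which resolves itself once $p_i$ is set to $r_i \cdot p \cdot r_i\inV$. The non-singleton hypothesis on the $L_i$ is not consumed in this lemma itself but is presumably needed in later applications to ensure that the elements $p_i$ act as genuine nontrivial periods of the sets.
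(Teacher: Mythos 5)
Your proof is correct, and both directions follow the same basic strategy as the paper (write each $w_i\in L_i$ as $v_i\cdot p_i^{k_i}$ and push powers of the periods through the product via the conjugation recurrence); the sufficiency direction is essentially identical. Where you genuinely deviate is in the necessity direction: the paper fixes elements $s_i$ of the prefix language $L_1u_1\cdots L_{i-1}u_{i-1}$ and $t_i$ of the suffix language $u_iL_{i+1}\cdots L_n$ and sets $v_i\fgeq s_i\inV\cdot v\cdot t_i\inV$, which forces a separate induction on $n-i$ to show that $t_i\cdot p\cdot t_i\inV$ coincides with the recurrence-defined $p_i$ (the discrepancy $w_{i+1}=v_{i+1}p_{i+1}^{k_{i+1}}$ has to be cancelled inside the conjugation), and then a further computation to derive \eqref{eq:periodInclusion}. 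You instead take representatives $v_i\in L_i$ themselves and define $p_i\fgeq r_i\cdot p\cdot r_i\inV$ with $r_i\fgeq u_iv_{i+1}\cdots u_{n-1}v_n$; since $r_i\fgeq(u_iv_{i+1})\cdot r_{i+1}$, the recurrence holds by definition, the inclusion $L_i\subseteq v_i\angl{p_i}$ follows from one two-line cancellation against the instance $w_i=v_i$, and \eqref{eq:periodInclusion} is just \eqref{eq:period} instantiated at $w_i=v_i$ for all $i$. This buys you a shorter argument with no induction, at the (harmless) cost of committing to $v_i\in L_i$; the quantifier structure of the lemma is still respected, because your sufficiency argument does not depend on that particular choice. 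Your closing observation is also accurate: the non-singleton hypothesis is not needed for this equivalence (only non-emptiness is, to pick representatives), and it matters only in the later applications.
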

\begin{proof}
Let $s_1 = \ew$. For $i = 2,\ldots, n$ we fix some word 
$s_i \in L_1\cdot u_1\cdot L_2\cdot \ldots\cdot L_{i-1}\cdot u_{i-1}$.
Likewise, let $t_n = \ew$ and for $i = 1,\ldots, n-1$ fix some word 
$t_i \in u_i\cdot L_{i+1}\cdot \ldots\cdot L_{n}$, and define
$v_i \fgeq s_i\inV \cdot v\cdot t_i\inV$.

First assume that the inclusion \eqref{eq:period} holds.
Let $p'_i\fgeq t_i\cdot p\cdot t_i\inV$.
Then for all $i$, $s_i\cdot L_i\cdot t_i \subseteq v\cdot \angl{p}$, and therefore
{\small$$
L_i \subseteq 
	s_i\inV\cdot v\cdot \angl{p}\cdot t_i\inV \fgeq 
	s_i\inV\cdot v\cdot t_i\inV\cdot t_i\cdot \angl{p}\cdot t_i\inV \fgeq 
	v_i \angl{p'_i}
$$}
We claim that $p'_i = p_i$ for all $i=1,\ldots,n$.
We proceed by induction on $n-i$. 
As $t_n = \ew$, we have that $p'_n = p = p_n$.
For $i<n$, we can rewrite
$t_i \fgeq u_i\cdot w_{i+1}\cdot t_{i+1}$ where $w_{i+1}\in L_{i+1}$ and thus is of the form
$v_{i+1}\cdot p_{i+1}^{k_{i+1}}$ for some exponent $k_{i+1}$.
\[
{\small\begin{array}{lll}
p'_i &\fgeq& t_i\cdot p\cdot t_i\inV \\
     &\fgeq& u_i\cdot w_{i+1}\cdot t_{i+1}\cdot p\cdot t_{i+1}\inV\cdot w_{i+1}\inV\cdot u_i\inV\\
     &\fgeq& u_i\cdot w_{i+1}\cdot p_{i+1} \cdot w_{i+1}\inV\cdot u_i\inV
\qquad\text{\normalsize by I.H.}\\
     &\fgeq& u_i\cdot v_{i+1}\cdot p_{i+1} \cdot v_{i+1}\inV\cdot u_i\inV\\
     &\fgeq& p_i
\end{array}}
\]
It remains to prove the inclusion \eqref{eq:periodInclusion}.
Since $w_i\in L_i$, we have by \eqref{eq:period} that
$v\inV w_1\cdot u_1\cdot\ldots w_n\cdot u_n\in \angl{p}$ holds.
Now we calculate:
\[
{\small\begin{array}{lll}
	v\inV\cdot w_1\cdot u_1\cdot\ldots u_{n-1}\cdot w_n 
&\fgeq& v\inV\cdot v_1\cdot p_1^{k_1}\cdot u_1\cdot\ldots\cdot u_{n-1} \cdot v_n\cdot p_n^{k_n}	\\
&\fgeq& v\inV\cdot v_1\cdot u_1\cdot v_2\cdot p_2^{k_1+k_2}\cdot u_2\cdot\ldots\cdot u_{n-1} \cdot v_n\cdot p_n^{k_n}	\\
&\ldots&		\\
&\fgeq& v\inV\cdot v_1\cdot u_1\cdot\ldots v_{n-1}\cdot u_{n-1}\cdot v_n\cdot p_n^k
\end{array}}
\]
where $k= k_1+\ldots+k_n$. Since $p_n=p$, the claim follows.

The other direction of the claim of the lemma follows directly:
\[
{\small\begin{array}{lll}
L_1 u_1 \ldots L_{n-1} u_{n-1} L_n  
&\subseteq& v_1\cdot \angl{p_1}\cdot u_1\cdot \ldots\cdot v_{n-1}\cdot \angl{p_{n-1}}\cdot u_{n-1}\cdot v_n\cdot \angl{p_n} \\
&\fgeq& v_1\cdot u_1\cdot v_2\cdot \angl{p_2}\cdot \angl{p_2}\cdot u_2\cdot \ldots\cdot v_{n-1}\cdot \angl{p_{n-1}}\cdot u_{n-1}\cdot v_n\cdot \angl{p_n} \\
&\fgeq& v_1\cdot u_1\cdot v_2\cdot \angl{p_2}\cdot u_2\cdot \ldots\cdot v_{n-1}\cdot \angl{p_{n-1}}\cdot u_{n-1}\cdot v_n\cdot \angl{p_n} \\
&\cdots&	\\
&\fgeq& v_1\cdot u_1\cdot v_2\cdot \ldots\cdot u_{n-1}\cdot v_n\cdot \angl{p_n} \\
&\fgeq& v_1\cdot u_1\cdot v_2\cdot \ldots\cdot u_{n-1}\cdot v_n\cdot \angl{p} \\
&\subseteq&v\cdot\angl{p}
\end{array}}
\]
where the last inclusion follows from \eqref{eq:periodInclusion}.
\end{proof}

\noindent
Let us call a non-empty, non-singleton
language $L\subseteq\F_\gen$ \emph{periodic}, if $L\subseteq v\cdot\angl{p}$ for some
$v,p\in\F_\gen$. 
Lemma \ref{l:ultimatelyPeriodic} then implies that if a concatenation of languages 
and elements from $\F_\gen$ is periodic, then so must be all non-singleton component
languages. In fact, the languages in the composition can then be arbitrarily permuted.

\begin{corollary}\label{c:swap}
Assume for non-empty, nonsingleton languages $L_1,\ldots,L_n\subseteq\F_\gen$
and $u_1, \ldots, u_{n-1} \in \F_\gen$ that
property \eqref{eq:period} holds.
Then for every permutation $\pi$, there are elements $u_{\pi,0},\ldots,u_{\pi,n}\in\F_\gen$
such that 
{\small$$L_1\cdot u_1\cdot \ldots\cdot L_{n-1}\cdot u_{n-1}\cdot L_n =
u_{\pi,0}\cdot L_{\pi(1)}\cdot u_{\pi,1}\cdot \ldots\cdot u_{\pi_n-1}\cdot L_{\pi(n)}\cdot u_{\pi,n}
$$}
\end{corollary}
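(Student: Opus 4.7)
The plan is to turn the inclusion supplied by Lemma~\ref{l:ultimatelyPeriodic} into an exact description of $L_1\cdot u_1\cdots L_n$ as a single coset of a concrete finite subset of $\angl{p}$, and then use commutativity of $\angl{p}$ (which as a group is just $(\Z,+)$) to absorb the effect of any permutation $\pi$ into freshly chosen connecting elements $u_{\pi,i}$.

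First I would apply Lemma~\ref{l:ultimatelyPeriodic} to obtain $v_1,\ldots,v_n$ and primitive roots $p_1,\ldots,p_n$ with $L_i\subseteq v_i\cdot\angl{p_i}$, $p_n\fgeq p$, and the conjugation chain $p_i\fgeq(u_i\cdot v_{i+1})\cdot p_{i+1}\cdot(u_i\cdot v_{i+1})\inV$. Writing $L_i=v_i\cdot\{p_i^k\mid k\in K_i\}$ for suitable $K_i\subseteq\Z$, I would set $c_i\fgeq u_i\cdot v_{i+1}\cdot u_{i+1}\cdots u_{n-1}\cdot v_n$ (with $c_n=\ew$) and $\tilde v_i\fgeq v_i\cdot c_i$, so that a downward induction on $i$ yields $p_i\fgeq c_i\cdot p\cdot c_i\inV$. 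The central \emph{telescoping} step is then to use $p_j^k\cdot(c_j\cdot c_{j'}\inV)\fgeq(c_j\cdot c_{j'}\inV)\cdot p_{j'}^k$ to push every cyclic factor to the right of the product, giving
\[
L_1\cdot u_1\cdots L_n \fgeq \tilde v_1\cdot F^*,\qquad F^*=\{p^{k_1+\cdots+k_n}\mid k_i\in K_i\}\subseteq\angl{p}.
\]
Because $(\Z,+)$ is commutative, $F^*$ depends only on the multiset $\{K_1,\ldots,K_n\}$ and is therefore invariant under every permutation of the indices.

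For an arbitrary permutation $\pi$, I would define $u_{\pi,0}\fgeq\tilde v_1\cdot\tilde v_{\pi(1)}\inV$, $u_{\pi,i}\fgeq c_{\pi(i)}\cdot c_{\pi(i+1)}\inV\cdot v_{\pi(i+1)}\inV$ for $1\le i<n$, and $u_{\pi,n}\fgeq c_{\pi(n)}$. The design principle is that $u_{\pi,i}\cdot v_{\pi(i+1)}\fgeq c_{\pi(i)}\cdot c_{\pi(i+1)}\inV$, which is exactly the element conjugating $p_{\pi(i)}$ to $p_{\pi(i+1)}$, so that the same push-to-the-right telescoping applies to the permuted product and collapses the chain $c_{\pi(1)}c_{\pi(2)}\inV\cdot c_{\pi(2)}c_{\pi(3)}\inV\cdots c_{\pi(n-1)}c_{\pi(n)}\inV\cdot c_{\pi(n)}$ to $c_{\pi(1)}$; combined with the choice of $u_{\pi,0}$ this reduces the RHS to $\tilde v_1\cdot F^*$, matching Step~2. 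The main obstacle is purely bookkeeping: one must verify that after the $\pi$-permuted push the conjugated cyclic factors really accumulate to the \emph{same} set $F^*$ inside $\angl{p}$, which is precisely where commutativity of addition in $\Z$ enters substantively; everything else is a straightforward cancellation inside the $c_{\pi(i)}$-telescope.
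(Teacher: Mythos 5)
Your proof is correct and follows essentially the same route as the paper: conjugating each factor $v_{i}\inV L_i$ into the common cyclic group $\angl{p}$ via the telescoping tails $c_i = u_i v_{i+1}\cdots u_{n-1}v_n$ and then invoking commutativity of $\angl{p}$ to permute the factors. Your version merely makes explicit the exponent sets $K_i$ and the connecting elements $u_{\pi,i}$ that the paper leaves implicit.
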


\begin{example}
We reconsider \LTW\ $M'$ and \DTA\ $B$ from Example~\ref{ex:compatible}.
\ignore{
To ease notation we remove the second component from the states,
thus $M'$ has states $q_0, q_1, q_2$ and rules
\[
{\small \begin{array}{lll}
q_0(f(x_1, x_2)) \to q_1(x_2) b q_2(x_1) & q_1(f(x_1, x_2)) \to \bot & q_2(f(x_1, x_2)) \to \bot \\
q_0(g(x_1)) \to \bot 	& q_1(g(x_1)) \to ab q_1(x_1) 	& q_2(g(x_1)) \to ab q_2(x_1) \\
q_0(h) \to \bot 	& q_1(h) \to a 	& q_2(h) \to ab
\end{array}}
\]
}
We observe that 
$\L(q_1) \subseteq a\cdot \angl{ba}$, $\L(q_2) \subseteq \angl{ab}$, and thus 
$\L(q_0) = \L(q_1)\cdot b\cdot \L(q_2) \subseteq \angl{ab}$.
Accordingly, the rule for state $q_0$ and input symbol $f$ is not ordered. 
Following the notation of Corollary~\ref{c:swap},
we find $v_1 = a$, $u_1 = b$ and $v_2 = \ew$, and the rule for $q_0$ and $f$ can be reordered to
$${\small q_0(f(x_1, x_2)) \to ab\cdot q_2(x_1)\cdot a\inV\cdot q_1(x_2)}$$
This example shows major improvements compared to the construction in~\cite{Palenta2016}.
Since we have inverses at hand, only \emph{local} changes must be applied
to the  sub-sequence $q_1(x_2)\cdot b\cdot q_2(x_1)$.
In contrast to the construction in~\cite{Palenta2016},
neither auxiliary states nor further changes to the rules of $q_1$ and $q_2$ are required.
\end{example}

\noindent
By Corollary~\ref{c:swap}, the order of occurrences of terms $q_k(x_{\sigma(k)})$
can be permuted in every sub-sequence
$q_i(x_{\sigma(i)})\cdot u_i\cdot \ldots \cdot u_{j-1} q_j(x_{\sigma(j)})$
where $\L(q_i)\cdot u_i\cdot \ldots\cdot u_{j-1}\cdot \L(q_j) \in u\cdot\angl{p}$ is periodic,
to satisfy the requirements of an ordered \LTW.
A sufficient condition for that is, according to Lemma~\ref{l:ultimatelyPeriodic}, 
that $\L(q_k)$ is periodic for each $q_k$ occurring in that sub-sequence.
Therefore we will determine the subset of \emph{all} states $q$ where
$\L(q)$ is periodic, and if so elements $v_q,p_q$ such that $\L(q)\subseteq v_q\cdot\angl{p_q}$.
In order to do so we compute an \emph{abstraction} of the sets $\L(q)$ by means of a
complete lattice which 
both reports constant values and also captures periodicity.

Let $\D = 2^{\F_\gen}$ denote the complete lattice of subsets of the free group $\F_\gen$.
We define a \emph{projection} $\alpha:\D\to\D$ by 
$\alpha(\emptyset)=\emptyset$, $\alpha(\{g\}) = \{g\}$, and for languages $L$ with
at least two elements,
$${\small \alpha(L) = \begin{cases}
			g\angl{p} & \text{if } L \subseteq g\angl{p} \text{ and $p$ is primitive} \\
			\F_\gen & \text{otherwise}
		\end{cases}}$$
The projection $\alpha$ is a \emph{closure} operator, i.e., is a monotonic function with
$L\subseteq\alpha(L)$, and $\alpha(\alpha(L)) = \alpha(L)$.
The image of $\alpha$ can be considered as an \emph{abstract} complete lattice $\D^\sharp$,
partially ordered by subset inclusion.
Thereby, the abstraction $\alpha$ commutes with least upper bounds as well as with
the group operation. For that, we define \emph{abstract} versions 
$\dunion,\dconc:(\D^\sharp)^2\to\D^\sharp$ 
of set union and the group operation by
\[
A_1\dunion A_2 = \alpha(A_1\cup A_2)\qquad
A_1\dconc A_2 = \alpha(A_1\cdot A_2)
\]
In fact, ``$\dunion$'' is the least upper bound operation for $\D^\sharp$.
The two abstract operators can also be more explicitly defined by:
\[
{\small\begin{array}{lllll}
\emptyset\dunion L &=& L\dunion\emptyset &=& L	\\
\F_\gen\dunion L &=& L\dunion\F_\gen &=& \F_\gen	\\
\{g_1\}\dunion\{g_2\} &=& 
\multicolumn{3}{l}{
\begin{cases}
	\{g_1\} & \text{if } g_1 = g_2	\\
	g_1\cdot\angl{p} & \text{if } g_1\neq g_2, p \text{ primitive root of } g_1\inV\cdot g_2 
	\end{cases}}	\\
\{g_1\}\dunion g_2\cdot\angl{p} &=&
g_2\cdot\angl{p}\dunion \{g_1\} &=&\begin{cases}
	g_2\cdot\angl{p} & \text{if } g_1\in g_2\cdot\angl{p}	\\
	\F_\gen	& \text{otherwise}
	\end{cases}	\\
g_1\cdot\angl{p_1}\dunion g_2\cdot\angl{p_2} &=&
\multicolumn{3}{l}{
\begin{cases}
	g_1\cdot\angl{p_1} & \text{if } p_2\in\angl{p_1} \text{ and } 
				        g_2\inV\cdot g_1\in\angl{p_1}	\\
	\F_\gen & \text{otherwise}
	\end{cases}}	\\
\end{array}}
\]
\ignore{
\begin{lemma}\label{l:exact_lub}
For all $L_1,L_2\subseteq\F_\gen$,
\[
\alpha(L_1\cup L_2) = \alpha(L_1)\dunion\alpha(L_2)
\]
\end{lemma}
}

\[
{\small\begin{array}{lllll}
\emptyset\dconc L &=& L\dconc\emptyset &=& \emptyset	\\
\F_\gen\dconc L &=& L\dconc\F_\gen &=& F_\gen\qquad\text{for } L\neq\emptyset	\\
\{g_1\}\dconc\{g_2\}	&=& \multicolumn{3}{l}{\{g_1\cdot g_2\}}		\\
\{g_1\}\dconc g_2\cdot\angl{p}	&=& \multicolumn{3}{l}{(g_1\cdot g_2)\cdot\angl{p}}	\\
g_1\cdot\angl{p}\dconc \{g_2\}	&=& \multicolumn{3}{l}{(g_1\cdot g_2)\cdot
				\angl{g_2\inV\cdot p\cdot g_2}}	\\
g_1\cdot\angl{p_1}\dconc g_2\cdot\angl{p_2}&=&\multicolumn{3}{l}{\begin{cases}
		(g_1\cdot g_2)\cdot\angl{p_2} & 
			\text{if } g_2\inV\cdot p_1\cdot g_2\in\angl{p_2}	\\
		\F_\gen		& \text{otherwise}
		\end{cases}}
\end{array}}
\]
\ignore{
\begin{lemma}\label{l:exact_conc}
For all $L_1,L_2\subseteq\F_\gen$,
\[
\alpha(L_1\cdot L_2) = \alpha(L_1)\dconc\alpha(L_2)
\]
\end{lemma}
}
\begin{lemma}\label{l:hom}
For all subsets $L_1,L_2\subseteq\F_\gen$,
$\alpha(L_1\cup L_2)	= \alpha(L_1)\dunion\alpha(L_2)$ and
$\alpha(L_1\cdot L_2) = \alpha(L_1)\dconc\alpha(L_2)$.
\end{lemma}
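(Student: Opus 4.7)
The plan is to exploit that $\alpha$ is a closure operator on $\D$ --- monotone, extensive ($L\subseteq\alpha(L)$), and idempotent. For the union identity this already suffices. Indeed, $L_1\cup L_2\subseteq\alpha(L_1)\cup\alpha(L_2)$ yields $\alpha(L_1\cup L_2)\subseteq\alpha(\alpha(L_1)\cup\alpha(L_2))$ by monotonicity, while monotonicity applied to $L_i\subseteq L_1\cup L_2$ gives $\alpha(L_1)\cup\alpha(L_2)\subseteq\alpha(L_1\cup L_2)$; applying $\alpha$ once more and using idempotence closes the loop.

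The same closure-operator argument yields $\alpha(L_1\cdot L_2)\subseteq\alpha(\alpha(L_1)\cdot\alpha(L_2))$. The reverse inclusion does not follow formally and requires a case distinction on the forms of $\alpha(L_i)$, each to be matched against the explicit formula for $\dconc$. Empty and singleton cases are immediate. For the mixed case $\alpha(L_1)=\{g_1\}$ (hence $L_1=\{g_1\}$) and $\alpha(L_2)=g_2\angl{p}$, I would note $g_1\cdot L_2\subseteq(g_1 g_2)\angl{p}$ and observe that the quotients of elements of $g_1 L_2$ coincide with those of $L_2$, hence generate $\angl{p}$ by hypothesis on $\alpha(L_2)$; this matches the prescribed $\dconc$-value. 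The symmetric case uses additionally that conjugation preserves primitivity.

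The main case is $\alpha(L_i)=g_i\angl{p_i}$ on both sides. If $g_2\inV p_1 g_2=p_2^k\in\angl{p_2}$, then every $g_1 p_1^a g_2 p_2^b$ rewrites as $g_1 g_2 p_2^{ak+b}$; the exponent differences across $L_1\cdot L_2$ generate $\Z$ because those in $L_1$ and $L_2$ separately do (their minimal enclosing cosets have primitive periods $p_i$), so $\alpha(L_1\cdot L_2)=(g_1 g_2)\angl{p_2}$, matching the table. If instead $g_2\inV p_1 g_2\notin\angl{p_2}$, suppose toward contradiction that $L_1\cdot L_2\subseteq g\angl{p}$ with $p$ primitive. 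Fixing $l_1\in L_1$ and varying $l_2,l_2'\in L_2$, the quotient $l_2\inV l_2'$ lies in $\angl{p}$, so $L_2\subseteq l_2\angl{p}$; uniqueness of the minimal enclosing coset (see below) then forces $\angl{p}=\angl{p_2}$. A parallel argument with varying $l_1$ and fixed $l_2$ gives $\angl{l_2\cdot p\cdot l_2\inV}=\angl{p_1}$, and substituting $l_2=g_2 p_2^b$ and unwinding yields $g_2\inV p_1 g_2\in\angl{p_2}$, a contradiction. Hence $\alpha(L_1\cdot L_2)=\F_\gen$, again matching the $\dconc$-table.

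The step requiring most care is the uniqueness of the minimal coset $g\angl{p}$ (with $p$ primitive) enclosing a non-singleton $L\subseteq\F_\gen$. It rests on the fact that two distinct maximal cyclic subgroups of a free group intersect trivially (centralizers in free groups are cyclic), so two such cosets containing a non-singleton $L$ must coincide. The same tool handles the residual cases where $\alpha(L_1)=\F_\gen$ or $\alpha(L_2)=\F_\gen$: if $L_1\cdot L_2\subseteq g\angl{p}$, then fixing $l_2\in L_2$ gives $L_1\subseteq g\cdot l_2\inV\cdot\angl{l_2\cdot p\cdot l_2\inV}$ with $l_2 p l_2\inV$ primitive, contradicting $\alpha(L_1)=\F_\gen$; the case $\alpha(L_2)=\F_\gen$ is symmetric.
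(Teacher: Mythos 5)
Your proof is correct, but it takes a partly different route from the paper's. For the union identity you use only the closure-operator laws (extensivity, monotonicity, idempotence) to get $\alpha(L_1\cup L_2)=\alpha(\alpha(L_1)\cup\alpha(L_2))$, whereas the paper runs an exhaustive case analysis over the shapes of $\alpha(L_1),\alpha(L_2)$; your argument is shorter, though note it proves the identity for the abstract definition $A_1\dunion A_2=\alpha(A_1\cup A_2)$ and, unlike the paper's case analysis, does not also certify the explicit table for $\dunion$ that is later used for the actual computation in Lemma~\ref{l:periodicLang} (that is outside the lemma's literal statement, but it is a side benefit of the paper's proof). For the concatenation identity both you and the paper perform a case distinction on the shapes of $\alpha(L_i)$, but you discharge the two nontrivial cases (one factor mapping to $\F_\gen$, and both factors mapping to cosets with $g_2\inV p_1 g_2\notin\angl{p_2}$) by a direct argument resting on the uniqueness of the coset $g\angl{p}$, $p$ primitive, that can contain a given non-singleton set --- i.e.\ on the fact that distinct maximal cyclic subgroups of a free group intersect trivially --- whereas the paper delegates exactly these cases to Lemma~\ref{l:ultimatelyPeriodic}. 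Your conjugation computations (fixing $l_1$ and varying $l_2$ to force $\angl{p}=\angl{p_2}$, then varying $l_1$ and unwinding $l_2=g_2p_2^b$ to derive $g_2\inV p_1 g_2\in\angl{p_2}$) are sound, and the uniqueness fact you invoke is the same one that makes $\alpha$ well defined in the first place, so no tools beyond the free-group combinatorics already implicit in the paper are needed; your extra remarks about exponent differences generating $\Z$ are superfluous for the same reason. In summary, your version is more self-contained (no appeal to Lemma~\ref{l:ultimatelyPeriodic}) and cleaner on the union half, while the paper's version simultaneously validates the explicit operation tables and reuses machinery it has already established.
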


\noindent 
We conclude that $\alpha$ in fact represents a \emph{precise} abstract interpretation
in the sense of \cite{DBLP:books/sp/MullerOlm06}. 
\ignore{
meaning
that $\alpha(L)$ for a context free language $L$ can be computed directly by iterating 
over the abstract complete lattice $\D^\sharp$. 
More precisely}
Accordingly, we obtain:
\begin{lemma}\label{l:periodicLang}
For every \LTW\ $M$ and \DTA\ $B$ with compatible map $\iota$, 
the sets $\alpha(\L(q))$, $q$ state of $M$,
can be computed in polynomial time.
\end{lemma}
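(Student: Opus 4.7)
The plan is to reduce the computation to a fixpoint iteration in the abstract lattice $\D^\sharp$. First, I would introduce for each state $q$ of $M$ an unknown $X_q$ ranging over $\D^\sharp$ and, for every rule
$$q(f(x_1,\ldots,x_m)) \to u_0\cdot q_1(x_{\sigma(1)})\cdot u_1 \cdots u_{n-1}\cdot q_n(x_{\sigma(n)})\cdot u_n$$
with $\delta_B(\iota(q),f)$ defined, the constraint
$$X_q \;\sqsupseteq\; \{u_0\}\dconc X_{q_1}\dconc\{u_1\}\dconc\cdots\dconc X_{q_n}\dconc\{u_n\}.$$
The concrete language $\L(q)$ is the least solution of the analogous concrete system over $\D$, using $\cup$ and $\cdot$. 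By Lemma~\ref{l:hom}, $\alpha$ is a homomorphism for both operators and therefore constitutes a precise abstract interpretation in the sense of \cite{DBLP:books/sp/MullerOlm06}; hence the least solution $X_q^\star$ of the abstract system equals $\alpha(\L(q))$ for every $q$.

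Next, I would solve the abstract system by ordinary Kleene (chaotic) iteration from the initial valuation $\bot = \emptyset$. The decisive observation is that $\D^\sharp$ has constant ascending-chain height: every strictly ascending chain is of the form $\emptyset \sqsubset \{g\} \sqsubset g\cdot\angl{p} \sqsubset \F_\gen$, so each unknown $X_q$ can strictly increase at most three times. Consequently the iteration stabilizes within $O(|Q|)$ rounds, each of which re-evaluates the right-hand sides of the finitely many constraints once.

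It remains to implement the abstract operations efficiently. I would store elements of $\D^\sharp$ symbolically: $\emptyset$ and $\F_\gen$ by flags, singletons by an \SLP\ for $g$, and periodic sets by \SLP s for $g$ and a primitive $p$. The case distinctions defining $\dunion$ and $\dconc$ then reduce to primitive-root extraction, \SLP\ equality, and the test whether an element lies in a cyclic coset $g\cdot\angl{p}$; all three are polynomial by Lemma~\ref{l:SLPops}. The main obstacle is to ensure that the \SLP s produced during the iteration stay polynomial in size: each update replaces the \SLP s currently associated with $X_q$ by ones of size polynomial in those of its predecessors, and since only $O(|Q|)$ strict updates occur in total, both the representation sizes and the overall running time remain polynomial in the size of $M$ and $B$.
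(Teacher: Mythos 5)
Your proposal is correct and follows essentially the same route as the paper: the same constraint system over $\D^\sharp$, Kleene iteration from $\emptyset$ with the bounded chain height $\emptyset\sqsubset\{g\}\sqsubset g\cdot\angl{p}\sqsubset\F_\gen$ giving stabilization after $O(|Q|)$ strict updates, and \SLP-based implementation via Lemma~\ref{l:SLPops}. The only cosmetic difference is that you invoke the general exact-abstract-interpretation/fixpoint-transfer principle (which the paper itself cites just before the lemma), whereas the paper's proof spells this out as an induction showing $X_q^{(i)}=\alpha(\L^{(i)}(q))$ using Lemma~\ref{l:hom}.
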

\begin{proof}
We introduce one unknown $X_q$ for
every state $q$ of $M$, and one constraint for each rule of $M$ of the form
\eqref{eq:rule} where $\delta(\iota(q),f)$ is defined in $B$. This constraint is given by:
\begin{equation}
{\small
X_q\sqsupseteq u_0\dconc X_{q_1}\dconc \ldots\dconc u_{n-1}
                                    \dconc X_{q_n}\dconc u_n}	\label{eq:periodic_constraint}
\end{equation}
As the right-hand sides of the constraints \eqref{eq:periodic_constraint} all represent
monotonic functions, the given system of constraints has a \emph{least} solution.
In order to obtain this solution, we consider for each state $q$ of $M$, the sequence
$X_{q}^{(i)},i\geq 0$ of values in $\D^\sharp$ where $X_q^{(0)} = \emptyset$, 
and for $i>0$, we set
$X_q^{(i)}$ as the least upper bound of the values obtained from
the constraints with left-hand side $X_q$ of the form \eqref{eq:periodic_constraint}
by replacing the unknowns $X_{q_j}$ on the right-hand side with the values $X_{q_j}^{(i-1)}$.
By induction on $i\geq 0$, we verify that for all states $q$ of $M$,
\[
{\small X_q^{(i)} = \alpha(\L^{(i)}(q)})
\]
holds. Note that the induction step thereby, relies on Lemma \ref{l:hom}.
\ignore{
Let $\vars$ denote the set of non-terminals of CFG $G$ with $\abs{\vars} = N$
and $P$ be the set of rules in $G$.
We recall that the language generated by a context-free grammar $G$ (viewed as sets of elements
in the free group $\F_\gen$) can be obtained as
the least fixpoint of the system of equations
\begin{eqnarray}
\sigma(A) &=&
	\sem{\gamma_1\mid\ldots\mid\gamma_r}\,\sigma \qquad(A\to \gamma_1\mid\ldots\mid\gamma_r\in P)
	\label{eq:grammar}
\end{eqnarray}
where $\sigma: \vars\to \D$, and
\[
\begin{array}{lll}
\sem{\ew}\,\sigma &=& \{\ew\}, \\
\sem{g\gamma}\,\sigma &=& \{g\}\cdot\sem{\gamma}\sigma \qquad (g\in\F_\gen)	\\
\sem{A\gamma}\sigma &=& \sigma(A)\cdot\sem{\gamma}\sigma \qquad(A\in\vars)		\\
\sem{\gamma_1|\ldots|\gamma_r}\,\sigma &=& \sem{\gamma_1}\sigma\cup\ldots\cup\sem{\gamma_r}\sigma
\end{array}
\]
Each right-hand side of an unknown $\sigma(A),A\in\vars$, represents a monotonic function.
Therefore, by the Knaster-Tarski fixpoint theorem, the given system has a unique least solution.
For simplicity, we denote this solution again by $\sigma$. 
Then the language $\L(G)$ generated by the grammar equals the set $\sigma(S)$ for the start symbol
$S$ of $G$.
In order to compute the value $\alpha(\L(G))$, we refer to an \emph{abstract} version
of the system \eqref{eq:grammar} which is given by
\begin{eqnarray}
\sigma^\sharp(A) &=&
	\sem{\gamma_1\mid\ldots\mid\gamma_r}^\sharp\,\sigma^\sharp \qquad(A\to \gamma_1\mid\ldots\mid\gamma_r\in P)
	\label{eq:abstract_grammar}
\end{eqnarray}
where now $\sigma^\sharp:\vars\to\D^\sharp$, and 
the abstract semantics $\sem{.}^\#$ is given by
\[
\begin{array}{lll}
\sem{\ew}^\sharp\,\sigma^\sharp &=& \{\ew\}, \\
\sem{g\gamma}^\sharp\,\sigma^\sharp &=& \{g\}\dconc\sem{\gamma}^\sharp\sigma^\sharp \qquad (g\in\F_\gen)	\\
\sem{A\gamma}^\sharp\sigma^\sharp &=& \sigma^\sharp(A)\dconc\sem{\gamma}^\sharp\sigma^\sharp \qquad(A\in\vars)		\\
\sem{\gamma_1|\ldots|\gamma_r}^\sharp\,\sigma^\sharp &=& \sem{\gamma_1}^\sharp\sigma^\sharp\dunion\ldots\dunion\sem{\gamma_r}^\sharp\sigma^\sharp
\end{array}
\]
The right-hand sides of the abstract system represent monotonic functions as well.
Therefore, this system has a unique least solution, now over the complete lattice $\D^\sharp$.
For convenience, let us denote this least solution again by $\sigma^\sharp$.
Due to Lemmas \ref{l:exact_lub} and \ref{l:exact_conc},
we have for each $A\to(\gamma_1\mid\ldots\mid\gamma_r)\in P$, that
\[
\alpha(\sem{\gamma_1\mid\ldots\mid\gamma_r}\,\sigma) =
\sem{\gamma_1\mid\ldots\mid\gamma_r}^\sharp\,(\alpha\circ\sigma)
\]
holds for each $\sigma:\vars\to\D$.
Therefore by the Fixpoint Transfer Lemma~\cite{DBLP:journals/jacm/AptP86},
we conclude that $\alpha\circ\sigma =\sigma^\sharp$ holds.
In particular, $\alpha(\L(G)) = \alpha(\sigma(S)) = \sigma^\sharp(S)$.
}

\ignore{
Let $\la = {2^{\F_\gen}}^N$ and $f: \la \mapsto \la$ be defined by
$f(\sigma)A = \bigcup\{\sem{\gamma}\sigma \mid A \to \gamma \in P\}$.
Let $\la' = {\D^\#}^N$ and $f^\#: \la' \mapsto \la'$ be defined by
$f^\#(\sigma^\#)A = \bigsqcup\{\sem{\gamma}\sigma^\# \mid A \to \gamma \in P\}$.
To prove that $\alpha \circ f = f^\# \circ \alpha$ we show
\[
\begin{array}{lll}
((f^\# \circ \alpha)\sigma^\#)A &=& f^\#(\alpha \circ \sigma^\#)A \\
			&=& \bigsqcup\{\sem{\gamma}(\alpha \circ \sigma^\#) \mid A \to \gamma \in P\} \\
			&=& \alpha(\bigcup\{\sem{\gamma}\sigma \mid A \to \gamma \in P\}\\
			&=& \alpha \circ (f^\# \sigma^\#)
\end{array}
\]
Therefore we can apply the Transfer Lemma~\cite{DBLP:journals/jacm/AptP86} that yields that
$\alpha(\nu f) = \nu f^\#$ with $\nu f$ and $\nu f^\#$
the greatest fixpoints of $f$ and $f^\#$, respectively.
}
As each strictly increasing chain of elements in $\D^\sharp$ consists of at most four elements,
we have that the least solution of the constraint system is attained after at most
$3\cdot N$ iterations, if $N$ is the number of states of $M$, i.e., for each state $q$ of $M$,
$X_q^{(3N)} = X_q^{(i)}$ for all $i\geq 3N$.
The elements of $\D^\sharp$ can be represented by \SLP s where the operations $\dconc$ and $\dunion$
run in polynomial time, cf.\ Lemma \ref{l:SLPops}.
Since each iteration requires only a polynomial number of operations $\dconc$ and $\dunion$,
the statement of the lemma follows.
\end{proof}

\noindent
We now exploit the information provided by the $\alpha(\L(q))$ to remove trivial states as well
as order subsequences of right-hand sides which are periodic.

\begin{theorem}\label{t:order}
Let $B$ be a \DTA\ such that $\L(B)\neq\emptyset$.
For every \LTW\ $M$ with compatible map $\iota$,
an \LTW\ $M'$ with compatible map $\iota'$ can be constructed in polynomial time such
that
\begin{enumerate}
\item	$M$ and $M'$ are equivalent relative to $B$;
\item	$M'$ has no trivial states; 
\item	$M'$ is ordered.
\end{enumerate}
\end{theorem}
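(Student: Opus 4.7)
The plan is to compute the abstract values $\alpha(\L(q))\in\D^\sharp$ for every state via Lemma~\ref{l:periodicLang}, use them first to eliminate trivial states by constant propagation, and then to sort the subtree indices inside every maximal periodic sub-sequence of each right-hand side via Corollary~\ref{c:swap}.

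First, I compute $\alpha(\L(q))$ for every state $q$ of $M$. Each value is either $\emptyset$, a singleton $\{w_q\}$, a periodic coset $g_q\cdot\angl{p_q}$ with $p_q$ primitive, or $\F_\gen$. Since $B$ is reduced with $\L(B)\neq\emptyset$ and $\iota$ is compatible, $\L(q)$ is nonempty for every state of $M$, so $\alpha(\L(q))=\{w_q\}$ forces $\L(q)=\{w_q\}$.

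Second, I eliminate the trivial states by constant propagation: for every trivial state $q$, I substitute the constant $w_q$ wherever $q(x_i)$ occurs on a right-hand side (or in the axiom), absorbing it into the adjacent output constants, and then discard $q$ from the state set. The restriction $\iota_1$ of $\iota$ to the remaining states is still compatible with $B$, and the resulting \LTW\ $M_1$ is equivalent to $M$ relative to $B$ and has only non-trivial states.

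Third, I order each rule of $M_1$. For a rule
$q(f(x_1,\ldots,x_m))\to u_0\cdot q_1(x_{\sigma(1)})\cdot\ldots\cdot u_{n-1}\cdot q_n(x_{\sigma(n)})\cdot u_n$
I compute for every contiguous range $[i,j]\subseteq[1,n]$ the abstract product $\alpha(\L(q_i))\dconc u_i\dconc\ldots\dconc u_{j-1}\dconc\alpha(\L(q_j))$, which by Lemma~\ref{l:hom} equals $\alpha(\L(q_i)\cdot u_i\cdot\ldots\cdot u_{j-1}\cdot\L(q_j))$; the range is \emph{periodic} precisely when this abstract value differs from $\F_\gen$. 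I identify the maximal periodic ranges, and inside each one I apply Corollary~\ref{c:swap} with the permutation $\pi$ that sorts $\sigma(i),\ldots,\sigma(j)$ increasingly; the new output constants $u_{\pi,k}$ are read off from the data supplied by Lemma~\ref{l:ultimatelyPeriodic}.

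The main obstacle has two parts. (a) \emph{Semantic preservation:} Corollary~\ref{c:swap} only equates \emph{sets}, while we need equality as a function of the input tree. This is handled by Lemma~\ref{l:ultimatelyPeriodic}, which writes each element of $\L(q_r)$ as $v_r\cdot p_r^{k_r}$ with $v_r,p_r$ fixed and only the exponent $k_r$ depending on the chosen subtree; since the total exponent $\sum_r k_r$ is invariant under permutation, the constants $u_{\pi,k}$ can be chosen so that old and new rules evaluate to the same element of $\F_\gen$ on every input. (b) \emph{Well-definedness of maximal periodic ranges:} I must show that whenever two periodic ranges overlap, their union is periodic too. Because each intermediate $\L(q_k)$ in the overlap is non-singleton and periodic, its primitive period is unique up to inverse, which forces the two ranges to share a compatible period structure; Lemma~\ref{l:ultimatelyPeriodic} (together with Lemma~\ref{l:todo} to match conjugacy classes) then assembles the combined range into a periodic one. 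Thus the maximal periodic ranges are pairwise disjoint, and sorting them independently renders \emph{every} periodic sub-range (not only the maximal ones) strictly increasing in $\sigma$, so $M'$ is ordered.

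All the required manipulations on elements of $\F_\gen$ are carried out on \SLP\ representations and run in polynomial time by Lemma~\ref{l:SLPops}; combined with the polynomial cost of Lemma~\ref{l:periodicLang}, the entire construction is polynomial, yielding an \LTW\ $M'$ with compatible map $\iota'$ that satisfies (1)--(3).
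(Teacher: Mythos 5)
Your proposal is correct and follows essentially the same route as the paper: compute $\alpha(\L(q))$ via Lemma~\ref{l:periodicLang}, eliminate trivial states by substituting their constant outputs, and then reorder the maximal periodic sub-sequences of each right-hand side using Corollary~\ref{c:swap} (with Lemma~\ref{l:ultimatelyPeriodic} supplying the new constants), all on \SLP-represented group elements. Your extra arguments for element-wise (not just set-level) preservation of the semantics and for the disjointness of maximal periodic intervals make explicit two points the paper only asserts, and they are sound.
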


\begin{proof}
By Lemma \ref{l:periodicLang}, we can, in polynomial time, determine for every state $q$ of $M$,
the value $\alpha(\L(q))$. 
We use this information to remove from $M$ all trivial states.
W.l.o.g., assume that the axiom of $M$ is given by $u_0\cdot q_0(x_0)\cdot u_1$.
If the state $q_0$ occurring in the axiom of $M$ is trivial with $\L(q_0) = \{v\}$,
then $M_1$ has no states or rules, but the axiom $u_0\cdot v\cdot u_1$.

Therefore now assume that $q_0$ is non-trivial.
We then construct an \LTW\ $M_1$ whose set of states $Q_1$
consists of all \emph{non-trivial} states $q$ of $M$ where the compatible map 
$\iota_1$ of $M_1$ is obtained from $\iota$ by restriction to $Q_1$.
Since $\L(M)\neq\emptyset$, the state of $M$ occurring in the axiom is non-trivial.
Accordingly, the axiom of $M$ is also used as axiom for $M_1$.
Consider a non-trivial state $q$ of $M$ and $f\in\Sigma$.
If $\delta(\iota(q),f)$ is not defined $M_1$ has the rule
$q(f(x_1,\ldots,x_m)\to\bot$.
Assume that $\delta(\iota(q),f)$ is defined and $M$ has a rule of the form \eqref{eq:rule}.
Then $M_1$ has the rule
$$
{\small q(f(x_1,\ldots,x_m))\to u_0\cdot g_1\cdot\ldots\cdot u_{n-1}\cdot g_n\cdot u_n}
$$
where for $i=1,\ldots,n$, $g_i$ equals $q_i(x_{\sigma(i)})$ if $q_i$ is non-trivial,
and equals the single word in $\L(q_i)$ otherwise.
Obviously, $M$ and $M_1$ are equivalent relative to $B$ where $M_1$ now has no
trivial states, while for every non-trivial state $q$, the semantics of $q$ in $M$ and $M_1$
are the same relative to $B$.
Our goal now is to equivalently rewrite the right-hand side of each rule of $M_1$
so that the result is ordered.
For each state $q$ of the \LTW\ we determine whether there are $v, p \in \Br^*$
such that $\L(q) \subseteq v\angl{p}$, cf.\ Lemma~\ref{l:periodicLang}.
So consider a rule of $M_1$ of the form \eqref{eq:rule}.
By means of the values $\alpha(\L(q_i))$, $i=1,\ldots,n$, together with the
abstract operation ``$\dconc$'', we can determine maximal intervals $[i,j]$ such that
$\L(q_i)\cdot u_i\cdot\ldots\cdot u_{j-1}\cdot\L(q_j)$ is periodic,
i.e., 
$\alpha(\L(q_i))\dconc u_i\cdot\ldots\dconc u_{j-1}\dconc\alpha(\L(q_j)) \subseteq v\cdot\angl{p}$
for some $v,p\in\F_\gen$.
We remark that these maximal intervals are necessarily disjoint.
By Corollary~\ref{c:swap}, for every permutation 
$\pi:[i,j]\to[i,j]$, elements $u',u'_i,\ldots,u'_j,u''\in\F_\gen$
can be found so that 
$q_i(x_{\sigma(i)})\cdot u_i\cdot\ldots\cdot u_{j-1}\cdot q_j(x_{\sigma(j)})$ is equivalent to
$u'\cdot q_{\pi(i)}(x_{\sigma(\pi(i))})\cdot u'_i\cdot\ldots\cdot u'_{j-1}\cdot 
q_{\pi(j)}(x_{\sigma(\pi(j))})\cdot u''$.

In particular, this is true for the permutation $\pi$ with
$\sigma(\pi(i)) < \ldots <\sigma(\pi(j))$. 
Assuming that all group elements are represented as \SLP s, 
the overall construction runs in polynomial time.
\end{proof}

\noindent
In summary, we arrive at the main theorem of this paper.

\begin{theorem}\label{t:equivGroup}
The equivalence of \LTW s relative to some \DTA\ $B$ can be decided in polynomial time.
\end{theorem}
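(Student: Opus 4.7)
The plan is to reduce the problem, in polynomial time, to the same-ordered case already dispatched by Corollary~\ref{cor:equiSameOrdered}. Given two \LTW s $M_1, M_2$ whose equivalence relative to the \DTA\ $B$ we wish to decide, we may assume $\L(B)\neq\emptyset$, since otherwise the two translations trivially agree on $\L(B)$. First I apply Lemma~\ref{l:rho} to each $M_i$ to obtain an equivalent (relative to $B$) \LTW\ carrying a compatible map $\iota_i:Q_i\to H$. Then I apply Theorem~\ref{t:order} to each of these to obtain, in polynomial time, an equivalent \LTW\ $M'_i$ with compatible map $\iota'_i$ which is ordered and has no trivial states.

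Theorem~\ref{t:order} leaves each $M'_i$ in one of two shapes: either its axiom is a pure element $w_i\in\F_\gen$ (all states were trivial), or its axiom has the form $u^{(i)}_0\cdot q^{(i)}_0(x_0)\cdot u^{(i)}_1$ with $q^{(i)}_0$ non-trivial. If both $M'_1$ and $M'_2$ have the first shape, I just test whether the two axiom \SLP s represent the same element of $\F_\gen$ via Lemma~\ref{l:SLPops}. If exactly one has the first shape, say $M'_1$, then $\sem{M_1}$ is constant on $\L(B)$ while $\sem{M_2}$ takes at least two distinct values on $\L(B)$: indeed, $q^{(2)}_0$ being non-trivial means $\L(q^{(2)}_0)$ has at least two elements, and $\sem{M_2}(t)\fgeq u^{(2)}_0\cdot\sem{q^{(2)}_0}(t)\cdot u^{(2)}_1$, so $M_1$ and $M_2$ are inequivalent.

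In the remaining case both $M'_1$ and $M'_2$ have non-trivial stateful axioms referring to $x_0$, so the pair of their axiom states is co-reachable. I compute, by forward propagation along the rules, the set of co-reachable pairs $\angl{q,q'}\in Q'_1\times Q'_2$ (of size at most $|Q'_1|\cdot|Q'_2|$), and check for every such pair and every $f\in\Sigma$ with $\delta_B(\iota'_1(q),f)$ defined that the two corresponding rules have equal arity $n=n'$ and identical index permutation $\sigma=\sigma'$. If some pair fails this check, Lemma~\ref{l:orderedSame} contrapositively yields inequivalence. Otherwise $M'_1$ and $M'_2$ are same-ordered relative to $B$, and Corollary~\ref{cor:equiSameOrdered} decides their equivalence in polynomial time. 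All substeps are polynomial, yielding the claimed bound. The real work has already been done in Theorem~\ref{t:order} --- in particular in the periodicity abstraction $\alpha$ and the swap of Corollary~\ref{c:swap} --- and in Plandowski's test-set construction combined with Lohrey's \SLP\ equality procedure underlying Corollary~\ref{cor:equiSameOrdered}; the present theorem only glues these pieces together, so I do not expect any serious obstacle here beyond the routine case analysis on the shape of the axioms.
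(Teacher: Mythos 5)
Your proposal is correct and follows essentially the same route as the paper's own proof: normalize both transducers via Theorem~\ref{t:order}, check the same-ordered property (inequivalence follows from Lemma~\ref{l:orderedSame} if it fails), and otherwise conclude by Corollary~\ref{cor:equiSameOrdered}. The only difference is that you spell out routine details the paper leaves implicit, namely the initial application of Lemma~\ref{l:rho} and the case analysis on constant axioms, which is fine.
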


\begin{proof}
Assume we are given \LTW s $M,M'$ with compatible maps (relative to $B$).
By Theorem \ref{t:order}, we may w.l.o.g.\ assume that $M$ and $M'$ both have no trivial states
and are ordered.
It can be checked in polynomial time whether or not $M$ and $M'$ are same-ordered.
If they are not, then by Lemma \ref{l:orderedSame}, they cannot be equivalent relative to $B$.
Therefore now assume that $M$ and $M'$ are same-ordered.
Then their equivalence relative to $B$ is decidable in polynomial time by 
Corollary \ref{cor:equiSameOrdered}.
Altogether we thus obtain a polynomial decision procedure for equivalence of \LTW s 
relative to some \DTA\ $B$.
\end{proof}

\section{Conclusion}\label{s:conclusion}

We have shown that equivalence of \LTW s relative to a given \DTA\ $B$ can be decided in polynomial
time. 
For that, we considered \emph{total} transducers only, but defined the domain of allowed input trees
separately by means of the \DTA. This does not impose any restriction of generality, since 
any (possibly partial) linear deterministic top-down tree transducer can be translated in polynomial time
to a corresponding \emph{total} \LTW\ together with a corresponding \DTA\ (see, e.g., \cite{SeidlMK18}).
The required constructions for \LTW s which we have presented here, turn out to be more general
than the constructions provided in \cite{Palenta2016} since they apply to transducers which may not
only output symbols $a\in\gen$, but also their inverses $a\inV$.
At the same time, they are \emph{simpler} and easier to be proven correct due to the 
combinatorial and algebraic properties provided by the free group.

\bibliographystyle{splncs04} 
\bibliography{lit}

\appendix

\section{Proof of Lemma~\ref{l:rho}}
\begin{lem}
For an \LTW\ $M$ and a \DTA\ $B=(H,\Sigma,\delta_B,h_0)$, an \LTW\ $M'$  with 
a set of states $Q'$ together with a mapping $\iota:Q'\to H$ can be constructed
in polynomial time such that the following holds:
\begin{enumerate}
\item	$M$ and $M'$ are equivalent relative to $B$;
\item	$\iota$ is compatible.
\end{enumerate}
\end{lem}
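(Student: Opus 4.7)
The plan is a straightforward product construction between $M$ and $B$, which synchronizes the states of the transducer with those of the domain automaton so that compatibility holds by design. The states of $M'$ are the pairs $Q' = Q \times H$, with $\iota(\langle q, h\rangle) = h$. If the axiom of $M$ is simply $u_0 \in \F_\gen$, take it as the axiom of $M'$ and $Q' = \emptyset$. Otherwise, if the axiom of $M$ is $u_0 \cdot q_0(x_0) \cdot u_1$, let the axiom of $M'$ be $u_0 \cdot \langle q_0, h_0\rangle(x_0) \cdot u_1$, which immediately gives condition~(1) of compatibility.

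For each pair $\langle q, h\rangle \in Q'$ and each $f \in \Sigma$ of rank $m$, we distinguish two cases. If $\delta_B(h,f) = h_1 \ldots h_m$ is defined and $M$ has the rule \eqref{eq:rule}, then $M'$ receives the rule
\[
{\small \langle q,h\rangle(f(x_1,\ldots,x_m)) \to u_0\,\langle q_1, h_{\sigma(1)}\rangle(x_{\sigma(1)})\,u_1\,\ldots\,u_{n-1}\,\langle q_n, h_{\sigma(n)}\rangle(x_{\sigma(n)})\,u_n.}
\]
This immediately enforces condition~(2) of compatibility, because the second components of the pairs on the right-hand side are precisely the $h_{\sigma(i)}$. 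If instead $\delta_B(h,f)$ is undefined, then $M'$ is given the rule $\langle q,h\rangle(f(x_1,\ldots,x_m)) \to \bot$, satisfying condition~(3).

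For correctness, I would prove by induction on $\depth(t)$ that for every state $\langle q,h\rangle \in Q'$ and every $t \in \dom(h)$, we have $\sem{\langle q,h\rangle}(t) \fgeq \sem{q}(t)$. The inductive step uses that $t = f(t_1,\ldots,t_m)$ with $\delta_B(h,f) = h_1 \ldots h_m$ defined (so the non-$\bot$ rule applies) and $t_{\sigma(i)} \in \dom(h_{\sigma(i)})$, so the induction hypothesis applies to each recursive call. Applied at the axiom, this yields $\sem{M}(t) \fgeq \sem{M'}(t)$ for all $t \in \L(B) = \dom(h_0)$, which is condition~(1) of the lemma.

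Finally, the construction runs in polynomial time: $|Q'| \leq |Q| \cdot |H|$, every rule of $M'$ is obtained by relabelling states of a single rule of $M$ (adding at most one $\bot$-rule per symbol when $\delta_B$ is undefined), and the output fragments $u_i$ appearing in rules are copied verbatim (or kept as \SLP s). There is no real obstacle here; the only minor care needed is the degenerate case of a constant axiom, handled at the outset.
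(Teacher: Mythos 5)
Your proposal is correct and follows essentially the same route as the paper: a product construction with states $\langle q,h\rangle$, $\iota(\langle q,h\rangle)=h$, $\bot$-rules where $\delta_B$ is undefined, and a structural induction showing $\sem{\langle q,h\rangle}(t)\fgeq\sem{q}(t)$ for $t\in\dom(h)$. The only cosmetic difference is that you take all of $Q\times H$ as states while the paper adds only the pairs reachable from $\langle q_0,h_0\rangle$; both remain polynomial and compatible.
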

\begin{proof}
In case that the axiom of $M$ is in $\F_\gen$, we obtain $M'$ from $M$ using the axiom of $M$
and using empty sets of states and rules, respectively.
Assume therefore that the axiom of $M$ is of the form $u_0\cdot q_0(x_0)\cdot u_1$.
Then \LTW\ $M'$ is constructed as follows.
The set $Q'$ of states of $M'$ consists of pairs $\angl{q,h}$, $q\in Q, h\in H$ where 
$\iota(\angl{q,h}) = h$. In particular, $\angl{q_0,h_0}\in Q'$. As the axiom of $M'$ we then 
use $u_0\cdot\angl{q_0,h_0}(x_0)\cdot u_1$.
For a state $\angl{q,h}\in Q'$, consider each input symbol $f\in\Sigma$. 
Let $m\geq 0$ denote the rank of $f$.
If $\delta_B(h,f)$ is not defined, $M'$ has the rule
$$\angl{q,h}(f(x_1,\ldots,x_m))\to\bot$$
Otherwise, let $\delta_H(h,f) = h_1\ldots h_m$, and assume that $M$ has a rule of the form 
\eqref{eq:rule}.
Then we add the states $\angl{q_i,h_{\sigma(i)}}$ to $Q'$ together with the rule
$$\angl{q,h}(f(x_1,\ldots,x_m))\to u_0\cdot \angl{q_1,h_{\sigma(1)}}(x_{\sigma(1)})\cdot \ldots
			     \cdot u_{n-1}\cdot\angl{q_n,h_{\sigma(n)}}(x_{\sigma(n)})\cdot u_n$$
By construction, the mapping $\iota$ is compatible. 
We verify for each state $\angl{q,h}$ of $M'$
and each input tree $t\in\dom(h)$
that $\sem{q}(t) = \sem{\angl{q,h}}(t)$ holds.
This proof is by induction on the structure of $t$.
From that, the equivalence of $M$ and $M'$ relative to $B$ follows.
\end{proof}

\section{Proof of Corollary~\ref{c:swap}}
\begin{cor}
Assume for non-empty, nonsingleton languages $L_1,\ldots,L_n\subseteq\F_\gen$
and $u_1, \ldots, u_{n-1} \in \F_\gen$ that
property \eqref{eq:period} holds.
Then for every permutation $\pi$, there are elements $u_{\pi,0},\ldots,u_{\pi,n}\in\F_\gen$
such that 
{\small$$L_1\cdot u_1\cdot \ldots\cdot L_{n-1}\cdot u_{n-1}\cdot L_n =
u_{\pi,0}\cdot L_{\pi(1)}\cdot u_{\pi,1}\cdot \ldots\cdot u_{\pi_n-1}\cdot L_{\pi(n)}\cdot u_{\pi,n}
$$}
\end{cor}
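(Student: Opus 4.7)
The plan is to leverage the structural information provided by Lemma~\ref{l:ultimatelyPeriodic} to reduce every element of the concatenation to a canonical normal form in the cyclic subgroup $\angl{p}$, and then observe that the set of normal forms depends only on a sum of integer exponents, which is invariant under permutations.

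First, I would invoke Lemma~\ref{l:ultimatelyPeriodic} to obtain elements $v_1,\ldots,v_n$ and $p_1,\ldots,p_n$ with $L_i\subseteq v_i\cdot\angl{p_i}$, $p_n=p$, and the conjugation relation $p_i\fgeq(u_i\cdot v_{i+1})\cdot p_{i+1}\cdot(u_i\cdot v_{i+1})\inV$. Iterating this relation yields $p_i\fgeq a_i\cdot p\cdot a_i\inV$ where $a_i:=u_i\cdot v_{i+1}\cdot u_{i+1}\cdot v_{i+2}\cdots u_{n-1}\cdot v_n$ (and $a_n:=\ew$). For each $i$, write $L_i=\{v_i\cdot p_i^{k}\mid k\in S_i\}$ for a suitable $S_i\subseteq\Z$.

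Next I would show by a telescoping computation that every element of $L_1\cdot u_1\cdots u_{n-1}\cdot L_n$ reduces to the normal form $v^*\cdot p^{k_1+\ldots+k_n}$, where $v^*:=v_1\cdot u_1\cdot v_2\cdot u_2\cdots u_{n-1}\cdot v_n$. The key identity is $a_i\inV\cdot u_i\cdot v_{i+1}\cdot a_{i+1}\fgeq\ew$, which allows each factor $p_i^{k_i}=a_i p^{k_i}a_i\inV$ to be successively conjugated through and accumulated into a single $p$-power at the end. Consequently
\[
L_1\cdot u_1\cdots u_{n-1}\cdot L_n = \{v^*\cdot p^s \mid s\in S_1+S_2+\cdots+S_n\},
\]
where $+$ denotes the Minkowski sum in $\Z$.

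Finally, for an arbitrary permutation $\pi$, I would define the connecting elements explicitly: set $\tilde a_i:=a_{\pi(i)}$, take $u_{\pi,0}:=v^*\cdot(v_{\pi(1)}\cdot\tilde a_1)\inV$, $u_{\pi,n}:=\tilde a_n$, and for $1\leq i\leq n-1$, $u_{\pi,i}:=\tilde a_i\cdot\tilde a_{i+1}\inV\cdot v_{\pi(i+1)}\inV$, so that $\tilde a_i\inV\cdot u_{\pi,i}\cdot v_{\pi(i+1)}\cdot\tilde a_{i+1}\fgeq\ew$. Repeating the telescoping argument on the permuted expression shows that
\[
u_{\pi,0}\cdot L_{\pi(1)}\cdot u_{\pi,1}\cdots u_{\pi,n-1}\cdot L_{\pi(n)}\cdot u_{\pi,n}
=\{v^*\cdot p^{s}\mid s\in S_{\pi(1)}+\cdots+S_{\pi(n)}\},
\]
and since addition in $\Z$ is commutative, this equals the set on the left-hand side. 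The main obstacle is purely bookkeeping --- keeping the conjugating factors $a_i$ and the prefixes $v_i$ properly indexed through the permutation so that the telescoping identity is preserved --- but no further combinatorial lemma beyond Lemma~\ref{l:ultimatelyPeriodic} is needed.
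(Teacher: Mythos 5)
Your proposal is correct and takes essentially the same route as the paper's own proof: both pull the $v_i,p_i$ from Lemma~\ref{l:ultimatelyPeriodic}, conjugate by the tail words $a_i = u_i\cdot v_{i+1}\cdots u_{n-1}\cdot v_n$ so that all variable parts land in the commutative cyclic subgroup $\angl{p}$ (the paper phrases this via the conjugated languages $L'_i\subseteq\angl{p}$, you via exponent sets $S_i$), and then permute freely. Your explicit telescoping identity and the closed-form expressions for the connecting constants $u_{\pi,i}$ merely spell out the step the paper summarizes with ``From that, the corollary follows.''
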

\begin{proof}
For $i=1,\ldots,n$, let $v_i$ and $p_i$ be defined as in Lemma \ref{l:ultimatelyPeriodic}.
Then for all $i$, $L_i \subseteq v_i \angl{p_i}$. Moreover, the languages $L'_i$
defined by $L'_n = v_n\inV\cdot L_n$ and for $i<n$,
\[
{\small L'_i = 	(u_{i}\cdot v_{i+1}\cdot\ldots\cdot u_{n-1}\cdot v_n)\inV\cdot
		(v_i\inV\cdot L_i)\cdot
		(u_{i}\cdot v_{i+1}\cdot\ldots\cdot u_{n-1}\cdot v_n)}
\]
all are subsets of $\angl{p}$. Therefore their compositions can arbitrarily be permuted.
At the same time,
\[
{\small 
L_1\cdot u_1\cdot \ldots \cdot L_{n-1}\cdot u_{n-1}\cdot L_n 
\fgeq
v_1\cdot u_1\cdot \ldots \cdot v_{n-1}\cdot u_n\cdot v_n\cdot
L'_1\cdot\ldots L'_n
}
\]
From that, the corollary follows.
\end{proof}

\section{Proof of Lemma~\ref{l:hom}}
\begin{lem}
For all subsets $L_1,L_2\subseteq\F_\gen$,
$\alpha(L_1\cup L_2)	= \alpha(L_1)\dunion\alpha(L_2)$ and
$\alpha(L_1\cdot L_2) = \alpha(L_1)\dconc\alpha(L_2)$.
\end{lem}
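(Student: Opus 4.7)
My plan is to reduce both identities to the fact that $\alpha$ is a closure operator on $\D$ whose image $\D^\sharp$ consists of pairwise incomparable primitive cosets (together with $\emptyset$, singletons, and $\F_\gen$). I would first establish incomparability: if $L$ with $\abs{L}\geq 2$ is contained in two primitive cosets $g\angl{p}$ and $g'\angl{p'}$, then the ratio of two distinct elements of $L$ is a non-trivial common power of $p$ and of $p'$, and uniqueness of primitive roots in a free group forces $\angl{p}=\angl{p'}$ and then the two cosets to coincide. Consequently $\alpha(L)$ is the unique smallest element of $\D^\sharp$ containing $L$; in particular $\alpha$ is monotone, extensive, and idempotent.

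The union identity then follows from general closure-operator reasoning: monotonicity and extensiveness give $\alpha(L_1\cup L_2)\subseteq\alpha(\alpha(L_1)\cup\alpha(L_2))$, while minimality of each $\alpha(L_i)$ inside $\alpha(L_1\cup L_2)$ (which itself lies in $\D^\sharp$ and contains each $L_i$) yields the reverse inclusion after one more application of $\alpha$. For the concatenation identity, the inclusion $\alpha(L_1\cdot L_2)\subseteq\alpha(L_1)\dconc\alpha(L_2)$ is immediate by monotonicity, so I would focus on the reverse direction by case analysis on the shapes of $\alpha(L_1)$ and $\alpha(L_2)$. The trivial shapes (empty, singleton, $\F_\gen$) are handled directly; for example, if $\alpha(L_1)=\F_\gen$ and $b\in L_2$, then any primitive coset containing $L_1\cdot L_2$ would, after right-multiplication by $b\inV$, give a primitive coset containing $L_1$, contradicting $\alpha(L_1)=\F_\gen$.

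The main obstacle is the case $\alpha(L_1)=g_1\angl{p_1}$ and $\alpha(L_2)=g_2\angl{p_2}$ with both $p_i$ primitive. A direct computation shows $\alpha(L_1)\cdot\alpha(L_2)\subseteq g_1 g_2\cdot\angl{g_2\inV p_1 g_2}\cdot\angl{p_2}$. If $g_2\inV p_1 g_2\in\angl{p_2}$ (equivalently $g_2\inV p_1 g_2=p_2^{\pm 1}$ since both sides are primitive), the whole product lies in the primitive coset $g_1 g_2\angl{p_2}$ and uniqueness pins down $\alpha(L_1\cdot L_2)$ to exactly that coset. Otherwise, assuming for contradiction $L_1\cdot L_2\subseteq g\angl{p}$ for some primitive $p$, I would pick exponents $i_1\neq i_2$ and $j_1\neq j_2$ with $g_1 p_1^{i_a}\in L_1$ and $g_2 p_2^{j_b}\in L_2$ and consider ratios inside $\angl{p}$ of the four products $g_1 p_1^{i_a} g_2 p_2^{j_b}$: the ratio $p_2^{j_2-j_1}$ forces $\angl{p}=\angl{p_2}$ by primitivity, and then $(g_2\inV p_1 g_2)^{i_2-i_1}\in\angl{p_2}$ yields, via Lemma~\ref{l:todo} or directly by uniqueness of primitive roots, $g_2\inV p_1 g_2\in\angl{p_2}$ — contradicting the subcase hypothesis. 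Hence $\alpha(L_1\cdot L_2)=\F_\gen=\alpha(L_1)\dconc\alpha(L_2)$, completing the proof.
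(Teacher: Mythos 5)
Your proof is correct, but it takes a genuinely different route from the paper's. The paper argues by an exhaustive case analysis on the shapes of $\alpha(L_1)$ and $\alpha(L_2)$, verifying each entry of the explicit tables for $\dunion$ and $\dconc$; the two non-routine steps (that a product with a non-periodic factor cannot be periodic, and that for two periodic factors $L_1\cdot L_2$ is periodic iff $g_2\inV p_1 g_2\in\angl{p_2}$) are delegated to Lemma~\ref{l:ultimatelyPeriodic}. You instead first prove that a set with at least two elements is contained in at most one primitive coset, so that $\alpha(L)$ is the minimum element of $\D^\sharp$ containing $L$; this establishes that $\alpha$ is a closure operator (a fact the paper asserts without proof) and yields the union identity purely formally, with no case analysis at all. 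For concatenation you keep the case split on shapes but replace the appeal to Lemma~\ref{l:ultimatelyPeriodic} by a self-contained ``four products'' ratio argument combined with uniqueness of primitive roots (the same Lyndon--Schupp fact underlying Lemma~\ref{l:todo}), in the spirit of Lemma~\ref{l:periodicGroup} but simpler; your handling of the $\F_\gen$ case by translating with a single element of the other factor is likewise direct and sound. What your route buys is independence from Lemma~\ref{l:ultimatelyPeriodic} and an explicit justification of the minimality and uniqueness facts that the paper uses tacitly (e.g., ``due to primitivity''). What the paper's route buys is a direct verification of the tabulated formulas for $\dunion$ and $\dconc$, which is what Lemma~\ref{l:periodicLang} actually needs for the \SLP-based computation: your closure-operator argument proves the stated identity with $\dunion$ taken as $A_1\dunion A_2=\alpha(A_1\cup A_2)$, so to use the explicit table for unions you would still add a short check that the tabulated values agree with $\alpha(A_1\cup A_2)$ --- routine given your uniqueness lemma, but worth stating.
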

\begin{proof}
As $\emptyset \cup L = L \cup \emptyset = \emptyset$, it follows that
$\alpha(\emptyset \cup L) = \alpha(L \cup \emptyset) = \alpha(\emptyset) = \emptyset
					= \emptyset \dunion L' = L' \dunion \emptyset
                                                = \alpha(\emptyset) \dunion \alpha(L) = \alpha(L) \dunion \alpha(\emptyset)$.

Assume that $\alpha(L_1) = \F_\al$.
Let $L_2$ be some language, then
$\alpha(L_1 \cup L_2) = \alpha(L_2 \cup L_1) = \F_\al$ and
$\alpha(L_1) \dunion \alpha(L_2) = \F_\al \dunion \alpha(L_2) = \F_\al = \alpha(L_2) \dunion \F_\al = \alpha(L_2) \dunion \alpha(L_1)$.
The case where $\alpha(L_2)=\F_\al$ is analogous.

For $\alpha(L_1) = \{g_1\}$, $\alpha(L_2) = \{g_2\}$, both languages are singleton, and we obtain that
$\{g_1\} \cup \{g_2\} = \{g_1\}$ if and only if $g_1 = g_2$.
Accordingly, $\alpha(L_1 \cup L_2) = \alpha(\{g_1\}) = \{g_1\}=\alpha(\{g_1\}) \dunion \alpha(\{g_2\})$.
If $g_1 \neq g_2$ then $\{g_1\} \cup \{g_2\} \subseteq g_1\angl{g_1^-g_2}$
and $\alpha(L_1 \cup L_2) = g_1\angl{p}$ with $p$ the primitive root of $g_1^- g_2$.
Therefore, $\alpha(L_1 \cup L_2) = g_1\angl{p} = \alpha(\{g_1\}) \dunion \alpha(\{g_2\})$.

Assume that $\alpha(L_1)=\{g_1\}$ and $\alpha(L_2)=g_2\angl{p_2}$ for some primitive $p_2$. 
If $g_1\in g_2\angl{p_2}$, then $\alpha(L_1\cup L_2)= g_2\angl{p_2} = \alpha(L_1)\dunion\alpha(L_2)$.
Otherwise, i.e., if $g_1 \not\in g_2\angl{p_2}$, then
$L_1\cup L_2$ is not contained in $g\angl{p}$ for any $p$ (since $p_2$ was chosen primitive),
and therefore, $\alpha(L_1\cup L_2) = \F_\al = \alpha(L_1)\dunion\alpha(L_2)$.
A similar argument applies if $\alpha(L_2) = \{g_1\}$, and $\alpha(L_1) = g_2\angl{p_2}$.

Assume that $\alpha(L_1)=g_1\angl{p_1}$ and $\alpha(L_2)=g_2\angl{p_2}$ for some primitive $p_1,p_2$. 
If $p_2\in\angl{p_1}$ as well as $g_2^-g_1\in\angl{p_1}$, 
then $g_1\angl{p_1} = g_2\angl{p_2}$ (due to primitivity of $p_1,p_2$).
Moreover, $\alpha(L_1\cup L_2) = g_1\angl{p_1} = \alpha(L_1)\dunion\alpha(L_2)$.
Otherwise, i.e., if $p_2\not\in\angl{p_1}$ or $g_2^- g_1\not\in\angl{p_1}$, then
$L_1\cup L_2$ cannot be subset of $g\angl{p}$ for any $g,p\in\F_\al$.
Therefore, $\alpha(L_1\cup L_2) = \F_\al = \alpha(L_1)\dunion\alpha(L_2)$.

For the concatenation with the empty set and the product operator we find
$\alpha(\emptyset \cdot L) = \alpha(L \cdot \emptyset) = \alpha(\emptyset) = \emptyset 
= \alpha(\emptyset) \dconc \alpha(L) = \alpha(L) \dconc \alpha(\emptyset)$.

Assume that $\alpha(L_1) =\F_\al$. Then $L_1 \not\subseteq g\angl{p}$ for any $g,p\in\F_\al$.
Assume that $L_2\subseteq\F_\al$ is nonempty.
Then by Lemma~\ref{l:ultimatelyPeriodic}, $L_1 \cdot L_2$ and $L_2\cdot L_1$ cannot be contained in 
$g'\angl{p'}$ for any $g',p'$. Therefore,
$\alpha(L_1 \cdot L_2) = \alpha(L_2 \cdot L_1) = \F_\al =
\alpha(L_1) \dconc \alpha(L_2) = \alpha(L_2) \dconc \alpha(L_1)$.

For $\alpha(L_1)=\{g_1\}, \alpha(L_2)=\{g_2\}$, both languages are singletons,
and we obtain
$\alpha(L_1\cdot L_2) = \{g_1g_2\} = \{g_1\} \dconc \{g_2\} = \alpha(L_1) \dconc \alpha(L_2)$.

Now assume that $\alpha(L_1)=\{g_1\}$ and $\alpha(L_2)=g_2\angl{p_2}$.
Then $L_1=\{g_1\}$, while $L_1\cdot L_2$ is not a singleton language, but contained in
$g_1g_2\angl{p_2}$.
Therefore,
$\alpha(L_1\cdot L_2) = g_1g_2 \angl{p_2} = \alpha(L_1) \dconc \alpha(L_2)$.
Likewise, if $\alpha(L_1)=g_1\angl{p_1}$ and $\alpha(L_2)=\{g_2\}$, then
$L_2=\{g_2\}$, and $L_1\cdot L_2$ is a non-singleton language contained in
$g_1g_2\angl{g_2^-p_1g_2}$. Therefore, 
$\alpha(L_1\cdot L_2) = g_1g_2 \angl{g_2^-p_1g_2} = \alpha(L_1) \dconc \alpha(L_2)$.

Finally, let $\alpha(L_1) = g_1\angl{p_1}$ and $\alpha(L_2) = g_2\angl{p_2}$
be both ultimately periodic languages.
By Lemma~\ref{l:ultimatelyPeriodic}, $L_1 \cdot L_2$ is ultimately periodic
if and only if $g_2^-p_1g_2 \in \angl{p_2}$.
Thus if $L_1\cdot L_2$ is ultimately periodic, then
$\alpha(L_1 \cdot L_2) = g_1 g_2 \angl{p_2} = \alpha(L_1) \dconc \alpha(L_2)$.
Otherwise, $L_1 \cdot L_2 \not\subseteq g\angl{p}$ for any $g,p\in\F_\al$, and therefore
$\alpha(L_1 \cdot L_2) = \F_\al = \alpha(L_1) \dconc \alpha(L_2)$.
\end{proof}

\end{document}